\newtheorem{theorem}{Theorem}
\newtheorem{axiom}[theorem]{Axiom}
\newtheorem{claim}[theorem]{Claim}
\newtheorem{conjecture}[theorem]{Conjecture}
\newtheorem{corollary}[theorem]{Corollary}
\newtheorem{definition}[theorem]{Definition}
\newtheorem{example}[theorem]{Example}
\newtheorem{exercise}[theorem]{Exercise}
\newtheorem{lemma}[theorem]{Lemma}
\newtheorem{proposition}[theorem]{Proposition}
\newtheorem{remark}[theorem]{Remark}
\newtheorem{partial solution}[theorem]{Partial Solution}
\newenvironment{proof}[1][Proof]{\textbf{#1.} }{\ \rule{0.5em}{0.5em}}
\chardef\@x10\chardef\@xv60
\def\tcitime{
\def\@time{%
  \@minute\time\@hour\@minute\divide\@hour\@xv
  \ifnum\@hour<\@x 0\fi\the\@hour:%
  \multiply\@hour\@xv\advance\@minute-\@hour
  \ifnum\@minute<\@x 0\fi\the\@minute
  }}%
\def\QCTOpt[#1]#2{%
  \def\QCTOptB{#1}
  \def\QCTOptA{#2}
}
\def\QCTNOpt#1{%
  \def\QCTOptA{#1}
  \let\QCTOptB\empty
}
\def\Qct{%
  \@ifnextchar[{%
    \QCTOpt}{\QCTNOpt}
}
\def\QCBOpt[#1]#2{%
  \def\QCBOptB{#1}
  \def\QCBOptA{#2}
}
\def\QCBNOpt#1{%
  \def\QCBOptA{#1}
  \let\QCBOptB\empty
}
\def\Qcb{%
  \@ifnextchar[{%
    \QCBOpt}{\QCBNOpt}
}
\def\PrepCapArgs{%
  \ifx\QCBOptA\empty
    \ifx\QCTOptA\empty
      {}%
    \else
      \ifx\QCTOptB\empty
        {\QCTOptA}%
      \else
        [\QCTOptB]{\QCTOptA}%
      \fi
    \fi
  \else
    \ifx\QCBOptA\empty
      {}%
    \else
      \ifx\QCBOptB\empty
        {\QCBOptA}%
      \else
        [\QCBOptB]{\QCBOptA}%
      \fi
    \fi
  \fi
}
\def\GRAPHICSPS#1{%
 \ifcase\GRAPHICSTYPE
   \special{ps: #1}%
 \or
   \special{language "PS", include "#1"}%
 \fi
}%
\def\graffile#1#2#3#4{%
    \bgroup
    \leavevmode
    \@ifundefined{bbl@deactivate}{\def~{\string~}}{\activesoff}
    \raise -#4 \BOXTHEFRAME{%
        \hbox to #2{\raise #3\hbox to #2{\null #1\hfil}}}%
    \egroup
}%
\def\draftbox#1#2#3#4{%
 \leavevmode\raise -#4 \hbox{%
  \frame{\rlap{\protect\tiny #1}\hbox to #2%
   {\vrule height#3 width\z@ depth\z@\hfil}%
  }%
 }%
}%
\newif\ifwasdraft
\def\GRAPHIC#1#2#3#4#5{%
 \ifnum\draft=\@ne\draftbox{#2}{#3}{#4}{#5}%
  \else\graffile{#1}{#3}{#4}{#5}%
  \fi
 }%
\def\addtoLaTeXparams#1{%
    \edef\LaTeXparams{\LaTeXparams #1}}%
\newif\ifBoxFrame \BoxFramefalse
\newif\ifOverFrame \OverFramefalse
\newif\ifUnderFrame \UnderFramefalse
\def\BOXTHEFRAME#1{%
   \hbox{%
      \ifBoxFrame
         \frame{#1}%
      \else
         {#1}%
      \fi
   }%
}
\def\doFRAMEparams#1{\BoxFramefalse\OverFramefalse\UnderFramefalse\readFRAMEparams#1\end}%
\def\readFRAMEparams#1{%
 \ifx#1\end%
  \let\next=\relax
  \else
  \ifx#1i\dispkind=\z@\fi
  \ifx#1d\dispkind=\@ne\fi
  \ifx#1f\dispkind=\tw@\fi
  \ifx#1t\addtoLaTeXparams{t}\fi
  \ifx#1b\addtoLaTeXparams{b}\fi
  \ifx#1p\addtoLaTeXparams{p}\fi
  \ifx#1h\addtoLaTeXparams{h}\fi
  \ifx#1X\BoxFrametrue\fi
  \ifx#1O\OverFrametrue\fi
  \ifx#1U\UnderFrametrue\fi
  \ifx#1w
    \ifnum\draft=1\wasdrafttrue\else\wasdraftfalse\fi
    \draft=\@ne
  \fi
  \let\next=\readFRAMEparams
  \fi
 \next
 }%
\def\IFRAME#1#2#3#4#5#6{%
      \bgroup
      \let\QCTOptA\empty
      \let\QCTOptB\empty
      \let\QCBOptA\empty
      \let\QCBOptB\empty
      #6%
      \parindent=0pt%
      \leftskip=0pt
      \rightskip=0pt
      \setbox0 = \hbox{\QCBOptA}%
      \@tempdima = #1\relax
      \ifOverFrame
          \typeout{This is not implemented yet}%
          \show\HELP
      \else
         \ifdim\wd0>\@tempdima
            \advance\@tempdima by \@tempdima
            \ifdim\wd0 >\@tempdima
               \textwidth=\@tempdima
               \setbox1 =\vbox{%
                  \noindent\hbox to \@tempdima{\hfill\GRAPHIC{#5}{#4}{#1}{#2}{#3}\hfill}\\%
                  \noindent\hbox to \@tempdima{\parbox[b]{\@tempdima}{\QCBOptA}}%
               }%
               \wd1=\@tempdima
            \else
               \textwidth=\wd0
               \setbox1 =\vbox{%
                 \noindent\hbox to \wd0{\hfill\GRAPHIC{#5}{#4}{#1}{#2}{#3}\hfill}\\%
                 \noindent\hbox{\QCBOptA}%
               }%
               \wd1=\wd0
            \fi
         \else
            \ifdim\wd0>0pt
              \hsize=\@tempdima
              \setbox1 =\vbox{%
                \unskip\GRAPHIC{#5}{#4}{#1}{#2}{0pt}%
                \break
                \unskip\hbox to \@tempdima{\hfill \QCBOptA\hfill}%
              }%
              \wd1=\@tempdima
           \else
              \hsize=\@tempdima
              \setbox1 =\vbox{%
                \unskip\GRAPHIC{#5}{#4}{#1}{#2}{0pt}%
              }%
              \wd1=\@tempdima
           \fi
         \fi
         \@tempdimb=\ht1
         \advance\@tempdimb by \dp1
         \advance\@tempdimb by -#2%
         \advance\@tempdimb by #3%
         \leavevmode
         \raise -\@tempdimb \hbox{\box1}%
      \fi
      \egroup%
}%
\def\DFRAME#1#2#3#4#5{%
 \begin{center}
     \let\QCTOptA\empty
     \let\QCTOptB\empty
     \let\QCBOptA\empty
     \let\QCBOptB\empty
     \ifOverFrame 
        #5\QCTOptA\par
     \fi
     \GRAPHIC{#4}{#3}{#1}{#2}{\z@}
     \ifUnderFrame 
        \nobreak\par\nobreak#5\QCBOptA
     \fi
 \end{center}%
 }%
\def\FFRAME#1#2#3#4#5#6#7{%
 \begin{figure}[#1]%
  \let\QCTOptA\empty
  \let\QCTOptB\empty
  \let\QCBOptA\empty
  \let\QCBOptB\empty
  \ifOverFrame
    #4
    \ifx\QCTOptA\empty
    \else
      \ifx\QCTOptB\empty
        \caption{\QCTOptA}%
      \else
        \caption[\QCTOptB]{\QCTOptA}%
      \fi
    \fi
    \ifUnderFrame\else
      \label{#5}%
    \fi
  \else
    \UnderFrametrue%
  \fi
  \begin{center}\GRAPHIC{#7}{#6}{#2}{#3}{\z@}\end{center}%
  \ifUnderFrame
    #4
    \ifx\QCBOptA\empty
      \caption{}%
    \else
      \ifx\QCBOptB\empty
        \caption{\QCBOptA}%
      \else
        \caption[\QCBOptB]{\QCBOptA}%
      \fi
    \fi
    \label{#5}%
  \fi
  \end{figure}%
 }%
\def\makeactives{
  \catcode`\"=\active
  \catcode`\;=\active
  \catcode`\:=\active
  \catcode`\'=\active
  \catcode`\~=\active
}
   \gdef\activesoff{%
      \def"{\string"}
      \def;{\string;}
      \def:{\string:}
      \def'{\string'}
      \def~{\string~}
    }
\def\FRAME#1#2#3#4#5#6#7#8{%
 \bgroup
 \ifnum\draft=\@ne
   \wasdrafttrue
 \else
   \wasdraftfalse%
 \fi
 \def\LaTeXparams{}%
 \dispkind=\z@
 \def\LaTeXparams{}%
 \doFRAMEparams{#1}%
 \ifnum\dispkind=\z@\IFRAME{#2}{#3}{#4}{#7}{#8}{#5}\else
  \ifnum\dispkind=\@ne\DFRAME{#2}{#3}{#7}{#8}{#5}\else
   \ifnum\dispkind=\tw@
    \edef\@tempa{\noexpand\FFRAME{\LaTeXparams}}%
    \@tempa{#2}{#3}{#5}{#6}{#7}{#8}%
    \fi
   \fi
  \fi
  \ifwasdraft\draft=1\else\draft=0\fi{}%
  \egroup
 }%
\def\TEXUX#1{"texux"}
\long\def\QQQ#1#2{%
     \long\expandafter\def\csname#1\endcsname{#2}}%
\long\def\QQA#1#2{}%
\def\QTR#1#2{{\csname#1\endcsname #2}}
\def\EXPAND#1[#2]#3{}%
\def\NOEXPAND#1[#2]#3{}%
\def\LaTeXparent#1{}%
\def\ChildStyles#1{}%
\def\ChildDefaults#1{}%
\def\QTagDef#1#2#3{}%
  \providecommand{\UNICODE}[2][]{}
\def\QQfnmark#1{\footnotemark}
 \def\abstract{%
  \if@twocolumn
   \section*{Abstract (Not appropriate in this style!)}%
   \else \small 
   \begin{center}{\bf Abstract\vspace{-.5em}\vspace{\z@}}\end{center}%
   \quotation 
   \fi
  }%
   \def\registered{\relax\ifmmode{}\r@gistered
                    \else$\m@th\r@gistered$\fi}%
 \def\r@gistered{^{\ooalign
  {\hfil\raise.07ex\hbox{$\scriptstyle\rm\text{R}$}\hfil\crcr
  \mathhexbox20D}}}}{}%
\newdimen\theight
\def\Column{%
 \vadjust{\setbox\z@=\hbox{\scriptsize\quad\quad tcol}%
  \theight=\ht\z@\advance\theight by \dp\z@\advance\theight by \lineskip
  \kern -\theight \vbox to \theight{%
   \rightline{\rlap{\box\z@}}%
   \vss
   }%
  }%
 }%
\def\qed{%
 \ifhmode\unskip\nobreak\fi\ifmmode\ifinner\else\hskip5\p@\fi\fi
 \hbox{\hskip5\p@\vrule width4\p@ height6\p@ depth1.5\p@\hskip\p@}%
 }%
\def\miss{\hbox{\vrule height2\p@ width 2\p@ depth\z@}}%
\def\tcol#1{{\baselineskip=6\p@ \vcenter{#1}} \Column}  %
\def\newfmtname{LaTeX2e}
  \DeclareOldFontCommand{\rm}{\normalfont\rmfamily}{\mathrm}
  \DeclareOldFontCommand{\sf}{\normalfont\sffamily}{\mathsf}
  \DeclareOldFontCommand{\tt}{\normalfont\ttfamily}{\mathtt}
  \DeclareOldFontCommand{\bf}{\normalfont\bfseries}{\mathbf}
  \DeclareOldFontCommand{\it}{\normalfont\itshape}{\mathit}
  \DeclareOldFontCommand{\sl}{\normalfont\slshape}{\@nomath\sl}
  \DeclareOldFontCommand{\sc}{\normalfont\scshape}{\@nomath\sc}
\def\alpha{{\Greekmath 010B}}%
\def\beta{{\Greekmath 010C}}%
\def\gamma{{\Greekmath 010D}}%
\def\delta{{\Greekmath 010E}}%
\def\epsilon{{\Greekmath 010F}}%
\def\zeta{{\Greekmath 0110}}%
\def\eta{{\Greekmath 0111}}%
\def\theta{{\Greekmath 0112}}%
\def\iota{{\Greekmath 0113}}%
\def\kappa{{\Greekmath 0114}}%
\def\lambda{{\Greekmath 0115}}%
\def\mu{{\Greekmath 0116}}%
\def\nu{{\Greekmath 0117}}%
\def\xi{{\Greekmath 0118}}%
\def\pi{{\Greekmath 0119}}%
\def\rho{{\Greekmath 011A}}%
\def\sigma{{\Greekmath 011B}}%
\def\tau{{\Greekmath 011C}}%
\def\upsilon{{\Greekmath 011D}}%
\def\phi{{\Greekmath 011E}}%
\def\chi{{\Greekmath 011F}}%
\def\psi{{\Greekmath 0120}}%
\def\omega{{\Greekmath 0121}}%
\def\varepsilon{{\Greekmath 0122}}%
\def\vartheta{{\Greekmath 0123}}%
\def\varpi{{\Greekmath 0124}}%
\def\varrho{{\Greekmath 0125}}%
\def\varsigma{{\Greekmath 0126}}%
\def\varphi{{\Greekmath 0127}}%
\def\nabla{{\Greekmath 0272}}
\def\FindBoldGroup{%
   {\setbox0=\hbox{$\mathbf{x\global\edef\theboldgroup{\the\mathgroup}}$}}%
}
\def\Greekmath#1#2#3#4{%
    \if@compatibility
        \ifnum\mathgroup=\symbold
           \mathchoice{\mbox{\boldmath$\displaystyle\mathchar"#1#2#3#4$}}%
                      {\mbox{\boldmath$\textstyle\mathchar"#1#2#3#4$}}%
                      {\mbox{\boldmath$\scriptstyle\mathchar"#1#2#3#4$}}%
                      {\mbox{\boldmath$\scriptscriptstyle\mathchar"#1#2#3#4$}}%
        \else
           \mathchar"#1#2#3#4%
        \fi 
    \else 
        \FindBoldGroup
        \ifnum\mathgroup=\theboldgroup 
           \mathchoice{\mbox{\boldmath$\displaystyle\mathchar"#1#2#3#4$}}%
                      {\mbox{\boldmath$\textstyle\mathchar"#1#2#3#4$}}%
                      {\mbox{\boldmath$\scriptstyle\mathchar"#1#2#3#4$}}%
                      {\mbox{\boldmath$\scriptscriptstyle\mathchar"#1#2#3#4$}}%
        \else
           \mathchar"#1#2#3#4%
        \fi     	    
	  \fi}
\newif\ifGreekBold  \GreekBoldfalse
\let\SAVEPBF=\pbf
\def\pbf{\GreekBoldtrue\SAVEPBF}%
  \newcounter{equationnumber}  
  \def\mathletters{%
     \addtocounter{equation}{1}
     \edef\@currentlabel{\theequation}%
     \setcounter{equationnumber}{\c@equation}
     \setcounter{equation}{0}%
     \edef\theequation{\@currentlabel\noexpand\alph{equation}}%
  }
    \def\BibTeX{{\rm B\kern-.05em{\sc i\kern-.025em b}\kern-.08em
                 T\kern-.1667em\lower.7ex\hbox{E}\kern-.125emX}}}{}%
\def\AmS{{\protect\usefont{OMS}{cmsy}{m}{n}%
                A\kern-.1667em\lower.5ex\hbox{M}\kern-.125emS}}}{}%
\def\@@eqncr{\let\@tempa\relax
    \ifcase\@eqcnt \def\@tempa{& & &}\or \def\@tempa{& &}%
      \else \def\@tempa{&}\fi
     \@tempa
     \if@eqnsw
        \iftag@
           \@taggnum
        \else
           \@eqnnum\stepcounter{equation}%
        \fi
     \fi
     \global\tag@false
     \global\@eqnswtrue
     \global\@eqcnt\z@\cr}
\def\TCItag{\@ifnextchar*{\@TCItagstar}{\@TCItag}}
\def\@TCItag#1{%
    \global\tag@true
    \global\def\@taggnum{(#1)}}
\def\@TCItagstar*#1{%
    \global\tag@true
    \global\def\@taggnum{#1}}
\def\dfrac#1#2{{\displaystyle {#1 \over #2}}}%
\def\QDATOP#1#2{{\displaystyle {#1 \atop #2}}}%
\let\DOTSI\relax
\def\RIfM@{\relax\ifmmode}%
\def\FN@{\futurelet\next}%
\def\iint{\DOTSI\intno@\tw@\FN@\ints@}%
\def\iiint{\DOTSI\intno@\thr@@\FN@\ints@}%
\def\iiiint{\DOTSI\intno@4 \FN@\ints@}%
\def\idotsint{\DOTSI\intno@\z@\FN@\ints@}%
\def\ints@{\findlimits@\ints@@}%
\newif\iflimtoken@
\newif\iflimits@
\def\findlimits@{\limtoken@true\ifx\next\limits\limits@true
 \else\ifx\next\nolimits\limits@false\else
 \limtoken@false\ifx\ilimits@\nolimits\limits@false\else
 \ifinner\limits@false\else\limits@true\fi\fi\fi\fi}%
\def\multint@{\int\ifnum\intno@=\z@\intdots@                          
 \else\intkern@\fi                                                    
 \ifnum\intno@>\tw@\int\intkern@\fi                                   
 \ifnum\intno@>\thr@@\int\intkern@\fi                                 
 \int}
\def\multintlimits@{\intop\ifnum\intno@=\z@\intdots@\else\intkern@\fi
 \ifnum\intno@>\tw@\intop\intkern@\fi
 \ifnum\intno@>\thr@@\intop\intkern@\fi\intop}%
\def\intic@{%
    \mathchoice{\hskip.5em}{\hskip.4em}{\hskip.4em}{\hskip.4em}}%
\def\negintic@{\mathchoice
 {\hskip-.5em}{\hskip-.4em}{\hskip-.4em}{\hskip-.4em}}%
\def\ints@@{\iflimtoken@                                              
 \def\ints@@@{\iflimits@\negintic@
   \mathop{\intic@\multintlimits@}\limits                             
  \else\multint@\nolimits\fi                                          
  \eat@}
 \else                                                                
 \def\ints@@@{\iflimits@\negintic@
  \mathop{\intic@\multintlimits@}\limits\else
  \multint@\nolimits\fi}\fi\ints@@@}%
\def\intkern@{\mathchoice{\!\!\!}{\!\!}{\!\!}{\!\!}}%
\def\plaincdots@{\mathinner{\cdotp\cdotp\cdotp}}%
\def\intdots@{\mathchoice{\plaincdots@}%
 {{\cdotp}\mkern1.5mu{\cdotp}\mkern1.5mu{\cdotp}}%
 {{\cdotp}\mkern1mu{\cdotp}\mkern1mu{\cdotp}}%
 {{\cdotp}\mkern1mu{\cdotp}\mkern1mu{\cdotp}}}%
\def\RIfM@{\relax\protect\ifmmode}
\def\text{\RIfM@\expandafter\text@\else\expandafter\mbox\fi}
\let\nfss@text\text
\def\text@#1{\mathchoice
   {\textdef@\displaystyle\f@size{#1}}%
   {\textdef@\textstyle\tf@size{\firstchoice@false #1}}%
   {\textdef@\textstyle\sf@size{\firstchoice@false #1}}%
   {\textdef@\textstyle \ssf@size{\firstchoice@false #1}}%
   \glb@settings}
\def\textdef@#1#2#3{\hbox{{%
                    \everymath{#1}%
                    \let\f@size#2\selectfont
                    #3}}}
\newif\iffirstchoice@
\def\Let@{\relax\iffalse{\fi\let\\=\cr\iffalse}\fi}%
\def\vspace@{\def\vspace##1{\crcr\noalign{\vskip##1\relax}}}%
\def\multilimits@{\bgroup\vspace@\Let@
 \baselineskip\fontdimen10 \scriptfont\tw@
 \advance\baselineskip\fontdimen12 \scriptfont\tw@
 \lineskip\thr@@\fontdimen8 \scriptfont\thr@@
 \lineskiplimit\lineskip
 \vbox\bgroup\ialign\bgroup\hfil$\m@th\scriptstyle{##}$\hfil\crcr}%
\def\Sb{_\multilimits@}%
\def\endSb{\crcr\egroup\egroup\egroup}%
\def\Sp{^\multilimits@}%
\newdimen\ex@
\def\rightarrowfill@#1{$#1\m@th\mathord-\mkern-6mu\cleaders
 \hbox{$#1\mkern-2mu\mathord-\mkern-2mu$}\hfill
 \mkern-6mu\mathord\rightarrow$}%
\def\leftarrowfill@#1{$#1\m@th\mathord\leftarrow\mkern-6mu\cleaders
 \hbox{$#1\mkern-2mu\mathord-\mkern-2mu$}\hfill\mkern-6mu\mathord-$}%
\def\leftrightarrowfill@#1{$#1\m@th\mathord\leftarrow
\mkern-6mu\cleaders
 \hbox{$#1\mkern-2mu\mathord-\mkern-2mu$}\hfill
 \mkern-6mu\mathord\rightarrow$}%
\def\overrightarrow{\mathpalette\overrightarrow@}%
\def\overrightarrow@#1#2{\vbox{\ialign{##\crcr\rightarrowfill@#1\crcr
 \noalign{\kern-\ex@\nointerlineskip}$\m@th\hfil#1#2\hfil$\crcr}}}%
\def\overleftarrow{\mathpalette\overleftarrow@}%
\def\overleftarrow@#1#2{\vbox{\ialign{##\crcr\leftarrowfill@#1\crcr
 \noalign{\kern-\ex@\nointerlineskip}$\m@th\hfil#1#2\hfil$\crcr}}}%
\def\overleftrightarrow{\mathpalette\overleftrightarrow@}%
\def\overleftrightarrow@#1#2{\vbox{\ialign{##\crcr
   \leftrightarrowfill@#1\crcr
 \noalign{\kern-\ex@\nointerlineskip}$\m@th\hfil#1#2\hfil$\crcr}}}%
\def\underrightarrow{\mathpalette\underrightarrow@}%
\def\underrightarrow@#1#2{\vtop{\ialign{##\crcr$\m@th\hfil#1#2\hfil
  $\crcr\noalign{\nointerlineskip}\rightarrowfill@#1\crcr}}}%
\def\underleftarrow{\mathpalette\underleftarrow@}%
\def\underleftarrow@#1#2{\vtop{\ialign{##\crcr$\m@th\hfil#1#2\hfil
  $\crcr\noalign{\nointerlineskip}\leftarrowfill@#1\crcr}}}%
\def\underleftrightarrow{\mathpalette\underleftrightarrow@}%
\def\underleftrightarrow@#1#2{\vtop{\ialign{##\crcr$\m@th
  \hfil#1#2\hfil$\crcr
 \noalign{\nointerlineskip}\leftrightarrowfill@#1\crcr}}}%
\def\qopnamewl@#1{\mathop{\operator@font#1}\nlimits@}
\let\nlimits@\displaylimits
\def\setboxz@h{\setbox\z@\hbox}
\def\varlim@#1#2{\mathop{\vtop{\ialign{##\crcr
 \hfil$#1\m@th\operator@font lim$\hfil\crcr
 \noalign{\nointerlineskip}#2#1\crcr
 \noalign{\nointerlineskip\kern-\ex@}\crcr}}}}
 \def\rightarrowfill@#1{\m@th\setboxz@h{$#1-$}\ht\z@\z@
  $#1\copy\z@\mkern-6mu\cleaders
  \hbox{$#1\mkern-2mu\box\z@\mkern-2mu$}\hfill
  \mkern-6mu\mathord\rightarrow$}
\def\leftarrowfill@#1{\m@th\setboxz@h{$#1-$}\ht\z@\z@
  $#1\mathord\leftarrow\mkern-6mu\cleaders
  \hbox{$#1\mkern-2mu\copy\z@\mkern-2mu$}\hfill
  \mkern-6mu\box\z@$}
\def\projlim{\qopnamewl@{proj\,lim}}
\def\injlim{\qopnamewl@{inj\,lim}}
\def\varinjlim{\mathpalette\varlim@\rightarrowfill@}
\def\varprojlim{\mathpalette\varlim@\leftarrowfill@}
\def\varliminf{\mathpalette\varliminf@{}}
\def\varliminf@#1{\mathop{\underline{\vrule\@depth.2\ex@\@width\z@
   \hbox{$#1\m@th\operator@font lim$}}}}
\def\varlimsup{\mathpalette\varlimsup@{}}
\def\varlimsup@#1{\mathop{\overline
  {\hbox{$#1\m@th\operator@font lim$}}}}
\def\align{\@verbatim \frenchspacing\@vobeyspaces \@alignverbatim
You are using the "align" environment in a style in which it is not defined.}
\let\csname endalign*\endcsname =\endtrivlist
\def\alignat{\@verbatim \frenchspacing\@vobeyspaces \@alignatverbatim
You are using the "alignat" environment in a style in which it is not defined.}
\let\csname endalignat*\endcsname =\endtrivlist
\def\xalignat{\@verbatim \frenchspacing\@vobeyspaces \@xalignatverbatim
You are using the "xalignat" environment in a style in which it is not defined.}
\let\csname endxalignat*\endcsname =\endtrivlist
\def\gather{\@verbatim \frenchspacing\@vobeyspaces \@gatherverbatim
You are using the "gather" environment in a style in which it is not defined.}
\let\csname endgather*\endcsname =\endtrivlist
\def\multiline{\@verbatim \frenchspacing\@vobeyspaces \@multilineverbatim
You are using the "multiline" environment in a style in which it is not defined.}
\let\csname endmultiline*\endcsname =\endtrivlist
\def\arrax{\@verbatim \frenchspacing\@vobeyspaces \@arraxverbatim
You are using a type of "array" construct that is only allowed in AmS-LaTeX.}
\def\tabulax{\@verbatim \frenchspacing\@vobeyspaces \@tabulaxverbatim
You are using a type of "tabular" construct that is only allowed in AmS-LaTeX.}
\let\csname endarrax*\endcsname =\endtrivlist
\let\csname endtabulax*\endcsname =\endtrivlist
 \def\endequation{%
     \ifmmode\ifinner 
      \iftag@
        \addtocounter{equation}{-1} 
        $\hfil
           \displaywidth\linewidth\@taggnum\egroup \endtrivlist
        \global\tag@false
        \global\@ignoretrue   
      \else
        $\hfil
           \displaywidth\linewidth\@eqnnum\egroup \endtrivlist
        \global\tag@false
        \global\@ignoretrue 
      \fi
     \else   
      \iftag@
        \addtocounter{equation}{-1} 
        \eqno \hbox{\@taggnum}
        \global\tag@false%
        $$\global\@ignoretrue
      \else
        \eqno \hbox{\@eqnnum}
        $$\global\@ignoretrue
      \fi
     \fi\fi
 } 
 \newif\iftag@ \tag@false
 \def\TCItag{\@ifnextchar*{\@TCItagstar}{\@TCItag}}
 \def\@TCItag#1{%
     \global\tag@true
     \global\def\@taggnum{(#1)}}
 \def\@TCItagstar*#1{%
     \global\tag@true
     \global\def\@taggnum{#1}}
     \def\tag{\@ifnextchar*{\@tagstar}{\@tag}}
     \def\@tag#1{%
         \global\tag@true
         \global\def\@taggnum{(#1)}}
     \def\@tagstar*#1{%
         \global\tag@true
         \global\def\@taggnum{#1}}
\begin{document}

\begin{center}
{\Large Proof of NP = coNP = PSPACE: Current upgrade\medskip }

L. Gordeev, E. H. Haeusler

\textit{Universit\"{a}t T\"{u}bingen, PUC Rio de Janeiro}

\texttt{lew.gordeew@uni-tuebingen.de}\textit{,}

\texttt{hermann@inf.puc-rio.br\medskip }
\end{center}

\textbf{Abstract. }{\small In \cite{GH2} (see also \cite{GH1}, \cite{GH3})
we skechted proofs of the equalities \textbf{NP} = \textbf{coNP} = \textbf{%
PSPACE}. Recall that relations between these well-known complexity classes
of computational problems are usually referred to as fundamental open
problems in theoretical computer science (cf. e.g. \cite{arora}, \cite{Papa}%
). It is known, however, that \textbf{PSPACE} can be characterized by the
provability in minimal propositional logic (\cite{Statman}, \cite{Svejdar}),
and \textbf{NP} $\subseteq $ \textbf{PSPACE} , while \textbf{NP} = \textbf{%
coNP} is a consequence of \textbf{NP} = \textbf{PSPACE} (see e.g. \cite
{arora}, \cite{Papa}). Our proofs use these results -- in fact, by the last
one it suffices to prove only the inclusion \textbf{PSPACE} $\subseteq $ 
\textbf{NP}. We apply novel tree-to-dag proof
compressing techniques adapted to Prawitz's \cite{Prawitz} Natural Deduction
(ND) for the propositional minimal logic coupled with corresponding Hudelmaier's
cutfree sequent calculus \cite{Hud}. This enables to compress ``huge''
(size-exponential) tree-like ND proofs into equivalent ``small''
(size-polynomial) dag-like proofs with respect to a modified (called \emph{%
regular} below) notion of provability that has polynomial certificate. This
yields \textbf{PSPACE} $\subseteq $ \textbf{NP}, as desired; in the present
paper we propose a more transparent and careful presentation. We also
comment on a different approach presented in \cite{Jer} that claimed to
obtain exponential lower bounds on the size of dag-like ND proofs under a
more conventional notion of provability determined by the condition \emph{%
``all deductive paths are closed''}, instead of our weakening \emph{``all
regular deductive paths are closed''}. We show (\S 1.3 below) that the
former notion of provability is stronger than ours, and hence 
\cite{Jer} is irrelevant to our results. That is, exponential
lower bounds from \cite{Jer} are well compatible with our polynomial upper
bounds obtained for ND proofs under our ``regular'' provability. Moreover,
according to \cite{Jer}, these exponential lower bounds should hold also for
designated Frege systems involved, which therefore appear useless for the
sake of comparison \textbf{NP} vs \textbf{coNP} vs \textbf{PSPACE}. To put
it more precisely, there are polynomial dag-like proofs in our ND formalism
which have no analogy in that of Frege systems from \cite{Jer}. }\footnote{%
{\footnotesize In fact, \cite{Jer} claims exponential lower bounds for the entire 
implicational minimal logic. This conclusion is wrong, as 
demonstrated by our Basic example (\S 1.3) and Example 8 (\S 2.2).}}

\section{Introduction}

\subsection{Complexity classes}

Recall standard definitions and well-known results concerning complexity
classes \textbf{NP}, \textbf{coNP} and \textbf{PSPACE} (cf. e.g. \cite{arora}%
, \cite{Papa}, \cite{Statman}, \cite{Svejdar}).

A language $L\subseteq \left\{ 0,1\right\} ^{\ast }$\ is in \textbf{NP}
(resp. \textbf{coNP}) if there exists a polynomial $p$ and a polynomial-time
TM $M$ such that a given input $x\in \left\{ 0,1\right\} ^{\ast }$ is in $L$
iff $M\left( x,u\right) =1$ for some (resp. every) $u\in \left\{ 0,1\right\}
^{\ast }$ of the length $\left| u\right| \leq p\left( \left| x\right|
\right) $. Note that \textbf{coNP} is complementary to \textbf{NP} (and vice
versa). However it is unclear a priori whether\ symmetric difference $\left( 
\mathbf{NP\setminus coNP}\right) \mathbf{\cup }\left( \mathbf{coNP\setminus
NP}\right) $ is empty or not. In the former case we'll arrive at \textbf{NP}
= \textbf{coNP}, which looks more natural and/or plausible, as it reflects
an idea of logical equivalence between model theoretical (re: \textbf{NP})
and proof theoretical (re: \textbf{coNP}) interpretations of
non-deterministic polynomial-time computability.

Now $L\subseteq \left\{ 0,1\right\} ^{\ast }$ is in \textbf{PSPACE} if there
exists a polynomial $p$ and a TM $M$ such that for every input $x\in \left\{
0,1\right\} ^{\ast }$, the total number of non-blank locations that occur
during $M$'s execution on $x$ is at most $p\left( \left| x\right| \right) $,
and $x$ is in $L$ iff $M\left( x\right) \!=\!1$. Thus \textbf{PSPACE}
requires polynomial upper bounds only on the space -- but not time -- of
entire computation. It is well-known that \textbf{NP} and \textbf{coNP} are
both contained in \textbf{PSPACE}, but a priori unclear whether at least one
of these classes is a proper subclass of \textbf{PSPACE}. It is known,
however, that the assumption \textbf{NP} = \textbf{PSPACE} implies \textbf{NP%
} = \textbf{coNP}, while validity problem in propositional minimal logic,
VAL, is representable by a language $L\subseteq \left\{ 0,1\right\} ^{\ast }$
from \textbf{PSPACE}.

Keeping this in mind we ask how to solve our crucial problem

$\left( Q\right) $: Is \textbf{PSPACE} contained in \textbf{NP} or not?

To prove the affirmative assumption it suffices to find a proof system 
$T$ in minimal logic whose provability assertion ``derivation $\partial $
proves formula $\varphi $'' has polynomial certificate (i.e. is verifiable
by a TM in time polynomial in the weight of $\partial $) and such that every
formula $\varphi \in VAL$ is provable by a derivation $\partial $ whose
weight is polynomial in the size of $\varphi $. To this end we let $T$ be a
suitable dag-like modification NM{\small $_{\rightarrow }^{\text{\textsc{d}}}
$} of Prawitz's ND for purely implicational minimal logic (see below) and
prove the required assertions. Our proof system NM{\small $_{\rightarrow }^{%
\text{\textsc{d}}}$} is weaker (though sound and complete for minimal logic)
than more familiar tree-like ND presentations and conventional dag-like
versions thereof. The main difference is based on our modified notion of
dag-like provability that is determined by the requirement ``all \emph{%
regular} deductive paths are closed'', for a suitable condition of
regularity involved. Another crucial advantage of NM{\small $_{\rightarrow
}^{\text{\textsc{d}}}$}\ \ follows directly from its dag-like structure that
enables us to compress given ``huge'' tree-like derivations into the desired
``small'' dag-like ones.

By contrast, it is unclear how to prove $\left( Q\right) $'s negation,
should the affirmative assumption fail. To conclude that a given $L\subseteq \left\{
0,1\right\} ^{\ast }$ is not in \textbf{NP} it would not suffice to prove
superpolynomial lower bounds on proofs in a chosen proof system $T$ because
no explicit description of all polynomially equivalent proof systems
involved is currently known. For instance, an attempt to address so-called
Frege systems (cf. \cite{Jer} and attached references) fails, as
demonstrated in this paper by Basic example and Example 8 (see \S \S 1.3,
2.2 below) showing that our proof system NM{\small $_{\rightarrow }^{\text{%
\textsc{d}}}$} has no analogy in \cite{Jer}. Indeed, \cite{Jer} deals only
with proof systems claimed to obey exponential lower bounds for proofs of
minimal tautologies, whereas NM{\small $_{\rightarrow }^{\text{\textsc{d}}}$}
obeys polynomial upper bound in question. Thus exponential lower bounds
claimed in \cite{Jer}, though interesting, are not relevant to our results,
and hence Frege systems and equivalent natural deductions from \cite{Jer}
appear useless for $\left( Q\right) $.

\subsection{Logic and proof systems}

In early 20th century \emph{proofs} (a.k.a. \emph{deductions} or \emph{
derivations}) were understood in linear form: \emph{Formula $\varphi $ is
derivable from axioms $\mathcal{A}$ (abbr.: $\mathcal{A}\vdash \varphi $)
iff 
\begin{equation*}
\left( \exists \varphi _{0},\cdots ,\varphi _{n}\right) \left( \varphi
=\varphi _{n}\ and\ \left( \forall k\leq n\right) \varphi _{k}\in A\ and\
\left( \exists i,j<k\right) \varphi _{j}=\varphi _{i}\rightarrow \varphi
_{k}\right) \text{,}
\end{equation*}
i.e. $\varphi _{k\ }$follows from $\varphi _{i}$ and $\varphi _{j}$ by the
rule ``modus ponens'' (detachment)}. Other sound rules of inference were
treated the same way. . These definitions reflect \emph{linear} presentation
of mathematical proofs of new theorems via axioms, known theorems, suitable
lemmas, etc. Corresponding proof systems in mathematical logic are called
Hilbert-Bernays style calculi (due to \cite{HB1}, \cite{HB2}, see also \cite
{Neu}, \cite{Tar}); certain extended versions are also called Frege systems
(cf. e.g. \cite{Jer} and attached references). Proofs in the algebraic logic
(boolean and relation algebras) are understood analogously with regard to
standard equational calculus and the transitivity rule of ``$=$'', instead
of \emph{modus ponens }(see e.g. \cite{TG} for references)

Graph-theoretic interpretations of logic proofs lead to genuine \emph{%
structural} proof theory, where basic proof systems are \emph{natural
deduction} (ND) and \emph{sequent calculus} (SC) -- initiated by Gentzen 
\cite{Gen1}, \cite{Gen2} and further developed in \cite{Prawitz} resp. \cite
{Sch1}, \cite{Sch2}, \cite{TS}, \cite{Hud}, \cite{Fel} (et al). Here proofs
are usually presented in tree-like form, where branching points are
determined by the conclusions of (at least binary) rules involved. ND proofs
contain single formulas, unlike SC proofs containing \emph{sequents}
instead. Note that ND proofs have no axioms. Instead, all assumptions shall
be \emph{discharged} according to special conditions on the threads. Both ND
and SC allow \emph{normalizations} (mutually different) making proofs more
transparent and suitable for analysis.

Linear proofs admit tree-like interpretation, and v.v.. Different nodes in
tree-like proofs might correspond to identical formulas (``references'') $%
\varphi _{i}$, $\varphi _{j}$ occurring in linear proofs $\mathcal{A}\vdash
\varphi $ (sequent case is analogous). So passing to tree-like proofs might
essentially increase the size of linear inputs. The opposite direction is
called \emph{proof compression}. Actually we compress tree-like ND proofs
into \emph{dag-like} ND proofs (\emph{dag} = directed acyclic graph) by
merging different nodes labeled with identical formulas.

\emph{Classical propositional logic} provides natural interpretations of $%
\mathbf{NP}$ and $\mathbf{coNP}$. Namely, well-known propositional
satisfiability and validity problems SAT and VAL are, respectively, $\mathbf{%
NP}$- and $\mathbf{coNP}$-complete. That is, any given language encoding
satisfiable (resp. valid) formulas is universal for the whole class $\mathbf{%
NP}$ (resp. $\mathbf{coNP}$). However, classical proof systems correlated
with VAL are less helpful for the comparison $\mathbf{NP}$ vs $\mathbf{coNP}$%
, as the size of related proofs of classical tautologies $x$ tends to be
exponential in $\left| x\right| $. \emph{Minimal} \emph{propositional logic} 
\cite{Joh}, \cite{PraMa} provides more suitable refinements. So we replace
classical logic by the minimal one while keeping in mind polynomial-size
interpretation of the former. Our chosen proof system (call it $T$) is a
Prawitz-style ND calculus adapted to minimal logic. This choice is not new,
except for our notion of the dag-like provability. To put it more exactly,
conventional interpretation of $T$\ (or other ND) assumes that deductions $%
\partial \in T$ are rooted trees whose nodes are assigned with formulas
ordered according to the inference rules allowed in $T$ (: locally correct
deductions). Top formulas and the root one are called assumptions and
conclusion of $\partial $, respectively. Proofs are locally correct
deductions whose all assumptions are discharged (cf. \cite{Prawitz}). We
consider a more liberal interpretation allowing dag-like deductions $%
\partial \in T$ ordered as \emph{dags} (directed acyclic graphs) -- not
necessarily trees. Such dag-like proofs require a more sophisticated notion
of provability (cf. \S 1.2 below). Obviously dag-like deductions tend be
smaller (even exponentially) than corresponding tree-like ones. We
elaborated special \emph{horizontal compressing} techniques that allow to
compress ``huge'' tree-like proofs $\partial \in T$ into ``small'' dag-like
proofs $\partial _{0}\in T$ with the same conclusions such that proof
correctness of $\partial _{0}$ is verifiable in polynomial time by
deterministic TM.

\subsection{More on tree- vs dag-like provability}

Propositional minimal logic is determined by two axioms $\alpha \rightarrow
\left( \beta \rightarrow \alpha \right) $ and $\left( \alpha \rightarrow
\left( \beta \rightarrow \gamma \right) \right) \rightarrow \left( \left(
\alpha \rightarrow \beta \right) \rightarrow \left( \alpha \rightarrow
\gamma \right) \right) $\ and rule \emph{modus ponens} $\fbox{$\dfrac{\alpha
\quad \quad \alpha \rightarrow \beta }{\beta }$}$ in the Hilbert-style
formalism whose vocabulary includes propositional variables and one
propositional connective `$\rightarrow $' (\emph{implication}; $\alpha $, $%
\beta $, $\gamma $, etc. are arbitrary formulas). The intuitionistic logic
extends minimal logic by one constant $\bot $ (falsity) and new axiom $\bot
\rightarrow \alpha $. It is well-known that there are polynomial-size
validity-preserving embeddings of classical into intuitionistic and
intuitionistic into minimal logic, respectively. Instead of Hilbert-style
systems for minimal and intuitionistic logic, we consider Gentzen-style 
\emph{sequent calculus} (SC) and Prawitz-style \emph{natural deductions}
(ND). These proof systems admit well-known proof-optimization: \emph{cut
elimination} in SC and \emph{normalization} in ND.

Cut elimination provides sound and complete systems of inferences without 
\emph{cut rule} (that is equivalent to modus ponens). Inferences in
resulting \emph{cutfree} SC systems satisfy a sort of \emph{subformula
property} (: all premise formulas occur as (sub)formulas in the
conclusions), which enables better proof search strategies. We can also
assume that the heights of cutfree proofs are linear in the size of
conclusions (this is not obvious for intuitionistic and/or minimal logic,
see \cite{Hud}, \cite{GH1}). In ND, the normalization allows to use just 
\emph{normal} proofs that are known to satisfy \emph{weak subformula property%
} (: every formula occurring in a maximal path occurs as (sub)formula in the
conclusion), see \cite{Prawitz}. However, there are no polynomial upper
bounds on the heights of arbitrary normal deductions in ND.

These optimizations have been elaborated for standard tree-like versions of
SC and ND. Note that tree-like approach can't provide polynomial upper
bounds on the size of resulting proofs. To achieve this goal we formalize
another idea called \emph{horizontal compression}. That is, in a given
tree-like proof we merge all nodes labeled with identical objects (sequents
or formulas) occurring on the same level such that in the compressed \emph{%
dag-like} proof every level will contain mutually different objects. In the
case of SC the compressed polynomial-height dag-like proofs still might be
too large due to possibly exponential number of distinct sequents occurring
in it. Therefore we prefer to compress tree-like ND proofs whose height is
polynomial in the size of conclusion. Since ND proofs operate with single
formulas (not sequents!), the horizontal compression in question will
provide us with desired polynomial-weight dag-like deductions. However, our
dag-like proofs require a more sophisticated notion of provability.

\textbf{Basic example} \newline
Our basic ND calculus for minimal logic, \textsc{NM}$_{\rightarrow }$,
includes two inferences $\left( \rightarrow I\right) $ and $\left(
\rightarrow E\right) $ ($\rightarrow $-\emph{introduction} and $\rightarrow $%
-\emph{elimination }\footnote{{\small {\footnotesize also known as \emph{%
modus ponens}.}}}): 
\begin{equation*}
\fbox{$\left( \rightarrow I\right) :\dfrac{\beta }{\alpha \rightarrow \beta }
$}\ \ and\ \ \fbox{$\ \left( \rightarrow E\right) \,:\dfrac{\alpha \quad
\quad \alpha \rightarrow \beta }{\beta }$}.
\end{equation*}
Its extension, \textsc{NM}$_{\rightarrow }^{+}$, also includes \emph{%
multipremise repetitions} $\left( R\right) _{n}$, $n\geq 1$: 
\begin{equation*}
\fbox{$\ \left( R\right) _{n}:\overset{n\ times}{\dfrac{\overbrace{\alpha
\quad \cdots \quad \alpha }}{\alpha }}\ $},
\end{equation*}
for $\alpha $, $\beta $, $\gamma $, etc. ranging over purely implicational
formulas of minimal logic.

A maximal deductive path $\Theta \!=\!\left[ x_{0},\cdots ,x_{h}=r\right] $
connecting leaf $x_{0}$ labeled $\alpha $ with the root $r$ labeled $\rho $
is called \emph{closed} (otherwise \emph{open}) if there exists $i<h$ such
that $x_{i}$ is the conclusion of $\left( \rightarrow I\right) $ labeled $%
\alpha \rightarrow \beta $ (for some $\beta $); these \emph{discharged}
assumptions are decorated as $\left[ \alpha \right] $ (possibly indexed like 
$\left[ \alpha \right] ^{i}$ for superscripts $^{(i)}$ decorating $\alpha
\rightarrow \beta $ involved). Now $\rho $ is called \emph{provable} in 
\textsc{NM}$_{\rightarrow }$ (resp. \textsc{NM}$_{\rightarrow }^{+}$) by a
given tree-like deduction $\partial $ that in turn is called a \emph{%
tree-like proof }of $\rho $ (abbr. $\partial \vdash \rho $), iff every
maximal deductive path in $\partial $ is closed.

Consider a tree-like deduction $\partial \in $\textsc{NM}$_{\rightarrow }$
of $\tau \rightarrow \sigma $:$\smallskip $ $\smallskip $ $\smallskip $

{\small \hspace{-20pt} 
\begin{tikzpicture}
  \draw (0.6,2.5) node {
    \AxiomC{$[\alpha]^{1}$}
    \AxiomC{$[\alpha \rightarrow\beta]^{2}$}
    \BinaryInfC{$\beta$}
    \UnaryInfC{$\gamma \rightarrow \beta$}    
    \UnaryInfC{$ (\alpha\rightarrow \beta) \rightarrow(\gamma \rightarrow \beta)^{(2)}$}
    \UnaryInfC{$\alpha\rightarrow ((\alpha\rightarrow \beta) \rightarrow(\gamma \rightarrow \beta))^{(1)}$}
    \UnaryInfC{$\alpha\rightarrow ((\alpha\rightarrow \beta) \rightarrow(\gamma \rightarrow \beta))$}
    \AxiomC{$[\delta]^{3}$}
    \AxiomC{$[\delta\rightarrow\beta]^{4}$}
    \BinaryInfC{$\beta$}
    \UnaryInfC{$\gamma \rightarrow \beta$}  
    \UnaryInfC{$(\delta\rightarrow \beta) \rightarrow(\gamma \rightarrow \beta)^{(4)}$}
    \UnaryInfC{$\delta \rightarrow ((\delta\rightarrow \beta) \rightarrow(\gamma \rightarrow \beta))^{(3)}$}
    \AxiomC{$[\tau]^{5}$}
    \UnaryInfC{$\left.\overset{} \tau \right.$}
    \UnaryInfC{$\left.\overset{}\tau\right.$}
    \UnaryInfC{$\left. \overset{}\tau \right.$}
    \UnaryInfC{$\left.\overset{} \tau \right.$}
   \BinaryInfC{$(\alpha\rightarrow ((\alpha \rightarrow \beta) \rightarrow (\gamma\rightarrow\beta))) \rightarrow \sigma$}
    \BinaryInfC{$\sigma$}
    \UnaryInfC{$\tau \rightarrow \sigma^{(5)}$}
   \UnaryInfC{$$}
    \DisplayProof
  } ;
\end{tikzpicture}
\newline
where \newline
$\tau :=(\delta \rightarrow ((\delta \rightarrow \beta )\rightarrow (\gamma
\rightarrow \beta )))\rightarrow ((\alpha \rightarrow ((\alpha \rightarrow
\beta )\rightarrow (\gamma \rightarrow \beta )))\rightarrow \sigma )$. 
\newline
} Obviously every maximal deductive path in $\partial $ is closed, and hence 
$\tau \rightarrow \sigma $ is provable in \textsc{NM}$_{\rightarrow }$.

Furthermore consider a (horizontally) compressed dag-like deduction $%
\partial ^{\prime }\in $\textsc{NM}$_{\rightarrow }^{+}$ that is obtained by
merging two occurrences $\gamma \rightarrow \beta $ using repetition $\left(
R\right) _{2}$ with conclusion $\beta $ : \footnote{{\small {\footnotesize %
Obviously this operation destroys tree structure of }$\partial $%
{\footnotesize . }}}\newline

{\small \hspace{-20pt} 
\begin{tikzpicture}
  \draw (0,2.5) node {
    \AxiomC{$[\alpha]^{1}$}
    \AxiomC{$[\alpha \rightarrow\beta]^{2}$}
    \BinaryInfC{$\beta $}
    \AxiomC{$[\delta]^{3}$}
    \AxiomC{$[\delta\rightarrow\beta]^{4}$}
    \BinaryInfC{$\beta$} 
    \BinaryInfC{$\beta$}
    \UnaryInfC{$\gamma \rightarrow \beta$} 
    \UnaryInfC{$$} 
       \DisplayProof
  } ;
  \draw [->] (0,1.5) -- (2.5,1) ;
  \draw [->] (0,1.5) -- (-2.5,1) ;
  \draw (1,0) node {
      \AxiomC{}
    \UnaryInfC{\!$ (\alpha\!\rightarrow \!\beta) \!\rightarrow\!(\gamma \!\rightarrow \!\beta)^{(2)}$}
    \UnaryInfC{\!$\alpha\!\rightarrow \!((\alpha\!\rightarrow \!\beta)\! \rightarrow\!(\gamma\! \rightarrow \!\beta))^{(1)}$}
    \UnaryInfC{$\alpha\!\rightarrow \!((\alpha\rightarrow\! \beta)\! \rightarrow\!(\gamma\! \rightarrow\! \beta))$}
    \AxiomC{}
    \UnaryInfC{\!\!$(\delta\!\rightarrow\! \beta) \!\rightarrow\!(\gamma \!\rightarrow \!\beta)^{(4)}$}
   \UnaryInfC{$\!\!\delta\! \rightarrow\! ((\delta\!\rightarrow\! \beta)\! \rightarrow\!(\gamma \!\rightarrow \!\beta))^{(3)}$}
    \AxiomC{$\QDATOP{\left[ \tau \right]^{5} }{ \,\downarrow }$}
    \UnaryInfC{$\left. \overset{}\tau \right.$}
    \UnaryInfC{$\left. \overset{%
}\tau \right.$}
    \BinaryInfC{$(\alpha\!\rightarrow\! ((\alpha\! \rightarrow\! \beta)\! \rightarrow\! (\gamma\!\rightarrow\!\beta))) \!\rightarrow \!\sigma$}
    \BinaryInfC{$\sigma$}
    \UnaryInfC{$\tau \!\rightarrow \!\sigma^{(5)}$} 
   \UnaryInfC{$$}
   
 \DisplayProof
  } ;
\end{tikzpicture}
\newline
} Note that $\partial ^{\prime }$ has open deductive paths like those
connecting assumptions $\alpha $ (or $\alpha \rightarrow \beta $) with $\tau
\rightarrow \sigma $ via $(\delta \rightarrow \beta )\rightarrow (\gamma
\rightarrow \beta )$ and/or $\delta $ (or $\delta \rightarrow \beta $) via $%
(\alpha \rightarrow \beta )\rightarrow (\gamma \rightarrow \beta )$.
However, the closure of other (called \emph{regular}) maximal deductive
paths in $\partial ^{\prime }$ already confirms the provability of $\tau
\rightarrow \sigma $. This is readily seen by tree-like unfolding of $%
\partial ^{\prime }$ back to $\partial $ that eliminates all open maximal
deductive paths in $\partial ^{\prime }$. Hence ``naive'' tree-like
assertion \emph{`every maximal deductive path is closed'} is \textbf{not a
necessary condition} for the dag-like provability. Instead, we replace it by
a more liberal \textbf{sufficient} dag-like condition \emph{`every regular
deductive path is closed'}, where, in $\partial ^{\prime }$, the regular
deductive paths are naturally determined by the repetition rule $\left(
R\right) _{2}$ involved. (In the general case of arbitrary dag-like proofs
we'll extend \textsc{NM}$_{\rightarrow }^{+}$ by adding appropriate
conditions of regularity that enable us to formalize this notion in
polynomial way).

To complete discussion we observe that $\partial $ and $\partial ^{\prime }$
both have the same total number of nodes, due to minimum size of the
multipremise repetition involved. This might downplay crucial importance of
our horizontal compression that in the general case provides compressed
dag-like deductions which tends to be exponentially smaller than tree-like
origins.

\section{Survey of proofs}

Recall that the minimal logic \cite{Joh} is determined by two axioms $\alpha
\rightarrow \left( \beta \rightarrow \alpha \right) $, $\left( \alpha
\rightarrow \left( \beta \rightarrow \gamma \right) \right) \rightarrow
\left( \left( \alpha \rightarrow \beta \right) \rightarrow \left( \alpha
\rightarrow \gamma \right) \right) $\ and rule \emph{modus ponens} $\fbox{$%
\dfrac{\alpha \quad \quad \alpha \rightarrow \beta }{\beta }$}$ in the
Hilbert-style formalism whose vocabulary includes propositional variables
and propositional connective `$\rightarrow $' ($\alpha $, $\beta $, $\gamma $%
, etc. denote corresponding formulas).

\subsection{Basic tree-like Natural Deductions}

{Our basic Natural Deduction calculus for minimal logic, \textsc{NM}$%
_{\rightarrow }$, includes two kinds of inference rules $\left( \rightarrow
I\right) $ and $\left( \rightarrow E\right) $: 
\begin{equation*}
\fbox{$\left( \rightarrow I\right) :\dfrac{\beta }{\alpha \rightarrow \beta }
$}\ \ and\ \ \fbox{$\ \left( \rightarrow E\right) \,:\dfrac{\alpha \quad
\quad \alpha \rightarrow \beta }{\beta }$}.
\end{equation*}
Its extended version, \textsc{NM}$_{\rightarrow }^{+}$, also includes
repetition rules $\left( R\right) _{n}$, $n\geq 1$: 
\begin{equation*}
\fbox{$\ \left( R\right) _{n}:\overset{n\ times}{\dfrac{\overbrace{\alpha
\quad \cdots \quad \alpha }}{\alpha }}\ $}.
\end{equation*}
Furthermore, denote by \textsc{NM}$_{\rightarrow }^{\text{\textsc{t}}}$ a
subsystem of \textsc{NM}$_{\rightarrow }^{+}$ that includes only special
instances of $\left( R\right) _{n}, n>1$, whose identical premises are
conclusions of pairwise different instances of $\left( \rightarrow I\right) $%
, $\left( \rightarrow E\right) $ and/or $\left( R\right) _{1}$ (which we
also abbreviate $\left( R\right) $). Such instances of multipremise
repetitions we call \emph{separation rules}, abbr.: $\left( S\right) $. So $%
\left( S\right) $ can be a $\left( R\right) _{2}$ from Basic example
(above): 
\begin{equation*}
\fbox{$\!\dfrac{\dfrac{\alpha \ \ \ \alpha \rightarrow \beta }{\beta }\quad
\quad \dfrac{\delta \ \ \ \delta \rightarrow \beta }{\beta }}{\beta }$}\text{%
,}
\end{equation*}
while mostly involved $\left( S\right) $ is $\left( R\right) _{s+2}$ of the
following shape, for pairwise different $\alpha, \beta$ and $\gamma
_{1},\cdots, \gamma _{s}$: 
\begin{equation*}
\fbox{$\!\dfrac{\dfrac{\beta }{\alpha \rightarrow \beta }\quad \dfrac{\alpha
\rightarrow \beta }{\alpha \rightarrow \beta }\quad \dfrac{\gamma _{1}\ \ \
\gamma _{1}\!\rightarrow \!\left( \alpha \!\rightarrow \!\beta \right) }{%
\alpha \rightarrow \beta }\quad \cdots \ \quad \dfrac{\gamma _{s}\ \quad
\gamma _{s}\!\rightarrow \!\left( \alpha \!\rightarrow \!\beta \right) }{%
\alpha \rightarrow \beta }}{\alpha \rightarrow \beta }$}\text{.}
\end{equation*}
Tree-like \textsc{NM}$_{\rightarrow }^{(-)}$-deductions $\partial $ are
finite rooted trees whose nodes are labeled with purely implicational
formulas ($\alpha ,\beta ,\gamma $, etc.) ordered in accordance with the
inferences exposed; we also assume that all top nodes (the leaves) have the
same height $h$ in $\partial $ (this is easily achieved using $\left(
R\right) $, if necessary). }

In any given $\partial $, a deductive path $\Theta \!=\!\left[ x_{0},\cdots
,x_{h}=r\right] $ connecting leaf $x_{0}$ labeled $\alpha $ with the root $r$
labeled $\rho $ is called \emph{closed} if there exists a $i<h$ such that $%
x_{i}$ is the conclusion of $\left( \rightarrow I\right) $ labeled $\alpha
\rightarrow \beta $ (for some $\beta $) -- such $\left( \rightarrow I\right) 
$ we also denote $\left( \rightarrow _{\alpha }I\right) $ and call \emph{%
discharged} corresponding assumptions $\alpha $ (which for the sake of
brevity are exposed in square brackets).

\begin{definition}
A given tree-like \textsc{NM}$_{\rightarrow }$(\textsc{NM}$_{\rightarrow
}^{+}$, \textsc{NM}$_{\rightarrow }^{\text{\textsc{t}}}$)-deduction $%
\partial $ \emph{proves} the conclusion $\rho $ (abbr.: $\partial \vdash
\rho $) iff its every deductive path $\Theta $ is closed; such $\partial $
is called a \emph{tree-like proof} of $\rho $. A purely implicational
formula $\rho $ is \emph{valid in minimal logic} iff there exists a
tree-like proof of $\rho $ in \textsc{NM}$_{\rightarrow }$.
\end{definition}

\begin{lemma}
For any purely implicational formula $\rho $, the following conditions (A),
(B), (C) and (D) are equivalent.

(A) $\rho $ is valid in minimal logic.

(B) There exists a tree-like proof of $\rho $ in \textsc{NM}$_{\rightarrow
}^{+}$.

(C) There exists a tree-like proof of $\rho $ in \textsc{NM}$_{\rightarrow
}^{\text{\textsc{t}}}$.

(D) $\rho $ is derivable in Hilbert-style formalism of minimal logic (see
above).
\end{lemma}

\begin{proof}
(A) $\equiv $ (D) is well-known (see e.g. \cite{Prawitz}) and proved by
induction on the heights of deductions involved. (B) $\equiv $ (C) $\equiv $
(D) easily follows by the same token.
\end{proof}

\subsection{Basic dag-like ND}

Now our basic \emph{dag-like} calculus for minimal logic, \textsc{NM}$%
_{\rightarrow }^{\text{\textsc{d}}}$, is a dag-like version of \textsc{NM}$%
_{\rightarrow }^{\text{\textsc{t}}}$, except that $\left( S\right) $ are
being understood disjunctively: ``\emph{if at least one premise is proved
then so is the conclusion}'' (in contrast to standard conjunctive
interpretation: ``\emph{if all premises are proved then so is the conclusion}%
''). This requires a more sophisticated definition of the \textsc{NM}$%
_{\rightarrow }^{\text{\textsc{d}}}$-provability, as follows.

Arbitrary \textsc{NM}$_{\rightarrow }^{\text{\textsc{d}}}$-deductions $%
\partial $ are finite rooted dags (directed acyclic graphs) whose nodes $x$
are labeled with formulas $\ell \left( x\right) $. Let $V=V\left( \partial
\right) $ and $E=E\left( \partial \right) $ denote respectively the sets of
nodes and directed edges $e=\left\langle x,y\right\rangle $ whose sources $x$
(also called \emph{parents}) occur above targets $y$ (also called \emph{%
children}). Note that unlike in \textsc{NM}$_{\rightarrow }^{\text{\textsc{t}%
}}$, parents in \textsc{NM}$_{\rightarrow }^{\text{\textsc{d}}}$ can have
many children.

By $r\in V$ and $\rho =\ell \left( r\right) $ we denote respectively the
root and conclusion of $\partial $. For any $x\in V$ let $h\left( x\right) $
denote the height of $x$ in $\partial $ where $h(r)=0$. $A$ will denote the
set of top formulas (also called \emph{assumptions}) occurring in $\partial $%
. We also assume that all top nodes (the leaves) have the same height $%
h=\max \left\{ h\left( x\right) :x\in V\right\} $. Let $\left\lceil
x\right\rceil $ abbreviate names $\left( \rightarrow I\right) $, $\left(
\rightarrow E\right) $, $\left( R\right) $, $\left( S\right) $ of inferences
with conclusion $x$ while assuming that $\left\lceil r\right\rceil \neq
\left( R\right) _{n}$.

For any $x,y\in V$ with $h\left( x\right) <h\left( y\right) $, $x\prec y$
will abbreviate ``$x$ is \emph{reachable} from $y$ (: there is a deductive
path from $y$ to $x$)''; $e_{1}\prec e_{2}$ will denote $x_{1}\preceq y_{2}$
for $e_{i}=\left\langle x_{i},y_{i}\right\rangle $ ($i\in \left\{
1,2\right\} $), where as usual $x\preceq y$ stands for $x\prec y\vee x=y$.

For any $e=\left\langle x,y\right\rangle \in E$ let $h\left( e\right)
:=h\left( x\right) $. For any $e=\left\langle x,y\right\rangle \in E $\
define the assumptions $A\left( e\right) =A\left( x\right) \subseteq A$ by $%
A\left( x\right) :=\left\{ \ell \left( y\right) \in A:x\preceq y\,\&h\left(
y\right) =h\right\} $. Thus $A\left( r\right) =A$.

For any $x\in V$ we denote by $IV\left( x\right) $ and/or $IE\left( x\right) 
$ and $OV\left( x\right) $ and/or $OE\left( x\right) $ the sets of $x$%
-ingoing and $x$-outgoing vertices and/or edges, respectively.

For any $e=\left\langle x,y\right\rangle \in E$ and $\alpha \in A$\ we also
let $IE\left( e\right) :=IE\left( x\right) $, $OE\left( e\right) :=OE\left(
y\right) $, $IE\left( e,\alpha \right) =IE\left( x,\alpha \right) :=\left\{
\left\langle z,x\right\rangle \in IE\left( x\right) :\alpha \in A\left(
x\right) \right\} $ and $OE\left( e,\alpha \right) \!=\!OE\left( y,\alpha
\right) \!:=\!\left\{ \left\langle y,z\right\rangle \in OE\left( y\right)
:\alpha \in A\left( y\right) \right\} $. Thus for any $x$, $A\left( x\right)
=\underset{e\in I\!E\left( x\right) }{\bigcup }A\left( e\right) $.

We assume that every deduction $\partial $ in question satisfies standard
conditions of local correctness with respect to the inferences involved
together with auxiliary assumption $\left\lceil r\right\rceil \in \left\{
\left( \rightarrow I\right) ,\left( \rightarrow E\right) \right\} $.

Recall that $|OE\left( x\right) |>1$ is possible in \textsc{NM}$%
_{\rightarrow }^{\text{\textsc{d}}}$ but not in \textsc{NM}$_{\rightarrow }^{%
\text{\textsc{t}}}$.\footnote{{\small {\footnotesize Dag-like deduction }$%
\partial ^{\prime }${\footnotesize \ from Basic example (above) is not a
tree-like one as the node }$x${\footnotesize \ labeled }$\gamma \rightarrow
\beta $ {\footnotesize has two children, i.e. }$\left| OE\left( x\right)
\right| =2>1 ${\footnotesize . }}}

To formalize the dag-like provability we'll also assume that every $%
\partial\in\QTR{sc}{NM}_{\rightarrow }^{\text{\textsc{d}}}$ is enriched with
a special \emph{flow function} $f\!:E\times E\times A\!\rightarrow \!\wp
\left( E\right) $ such that for any $e_{1}\prec e_{2}$ and $\alpha \in
A\left( e_{2}\right) $ in $\partial $ it holds $f\!\left( e_{1},e_{2},\alpha
\right) \subseteq IE\left( e_{2},\alpha \right) $. \footnote{{\small 
{\footnotesize Intuitively, any $e\in f\left( e_{1},e_{2},\alpha \right) $
should indicate that there exists ``regular'' deductive path (also called $f$%
-thread) from assumpion $\alpha $ to the root via edges $e$, $e_{2}$ and $%
e_{1}$\ (cf. Defintion 4).}}} The enriched deduction we'll denote $\partial
_{f}$, rather than $\partial $.

\begin{definition}
In $\partial _{f}\in \!\QTR{sc}{NM}_{\rightarrow }^{\text{\textsc{d}}}$, $f$%
\emph{-threads} are maximal deductive paths\newline
$\Theta \!=\!\left[ x_{0},\!\cdots \!,x_{h}\!=\!r\right] $ from leaves $%
x_{0} $ to the root $r$ such that 
\begin{equation*}
\left( \forall 0<i<j<h\right) \left\langle x_{i-1},x_{i}\right\rangle \!\in
\!f\!\left( \left\langle x_{j},x_{j+1}\right\rangle ,\left\langle
x_{i},x_{i+1}\right\rangle ,\ell \left( x_{0}\right) \right) .
\end{equation*}
A\negthinspace\ $f$-thread $\Theta $ is \emph{closed} if it contains $\left(
\rightarrow _{\ell \left( x_{0}\right) }\!I\right) $,\negthinspace\
i.e.\negthinspace\ $\left( \exists i<h\right) \!\left\lceil
x_{i}\right\rceil \!=\!\left( \rightarrow _{\ell \left( x_{0}\right)
}I\right) $, \emph{open} otherwise. A given deduction $\partial _{f}\in \! 
\QTR{sc}{NM}_{\rightarrow }^{\text{\textsc{d}}}$ \emph{proves} its
conclusion $\rho =\ell \left( r\right) $ iff every $f$-thread in $\partial
_{f}$ is closed; such $\partial _{f}$ is called a \emph{proof} of $\rho $
(abbr.: $\partial _{f}\vdash \rho $). \footnote{{\small {\footnotesize This
criterion of dag-like provability is weaker than naive requirement ``every
deductive path is closed'' that might be suggested by the tree-like case
(cf. Definition 1).}}}
\end{definition}

\begin{definition}
A $\partial _{f}\in \!\QTR{sc}{NM}_{\rightarrow }^{\text{\textsc{d}}} $ is 
\emph{regular} iff $f$ satisfies the following conditions 1--3, for all $%
x_{i}\in V$, $e_{i}=\left\langle x_{i},y_{i}\right\rangle \in E$ with $%
e_{1}\prec e_{2}$\ and $\alpha \in A\left( e_{2}\right) $, in $\partial$.

\begin{enumerate}
\item  $\emptyset \neq f\!\left( e_{1},e_{2},\alpha \right) \subseteq
IE\left( e_{2},\alpha \right) $.

\item  If $\left\lceil x_{2}\right\rceil \!=\!\left( \rightarrow \!E\right) $
then $\!f\!\left( e_{1},e_{2},\alpha \right) =IE\left( e_{2},\alpha \right) $%
.

\item  If $e_{0}\prec e_{1}$ then $f\!\left( e_{1},e_{2},\alpha \right)
\subseteq f\!\left( e_{0},e_{2},\alpha \right) $.
\end{enumerate}

Any $f$-thread in a given regular proof $\partial _{f}$ is called a \emph{%
regular deductive path}.
\end{definition}

\begin{theorem}
Any given $\rho $ is valid in minimal logic iff there exists a regular proof 
$\partial _{f}\in \!\QTR{sc}{NM}_{\rightarrow }^{\text{\textsc{d}}}$ of $%
\rho $.
\end{theorem}

\begin{proof}
1. The existence of a regular proof in question easily follows from
Definition 1, since any tree-like proof in \textsc{NM}$_{\rightarrow }$ is a
dag-like proof in \textsc{NM}$_{\rightarrow }^{\text{\textsc{d}}}$ with
respect to trivial flow function $f\left( e_{1},e_{2},\alpha \right)
:=IE\left( e_{2},\alpha \right) $. \newline
2. To show the inverse we unfold regular $\partial _{f}\in \,$\textsc{NM}$%
_{\rightarrow }^{\text{\textsc{d}}}$ according to ``roadmap'' $f$\ into a
tree-like deduction $\partial ^{\prime }=\partial _{f}^{\prime }\left(
h\right) \in \,$\textsc{NM}$_{\rightarrow }^{\text{\textsc{t}}}$ of $\rho $,
where for all $k\leq h$, intermediate deductions $\partial _{f}^{\prime
}\left( k\right) $ arise by recursion on $k$ such that $\partial
_{f}^{\prime }\left( k\right) $ is tree-like, i.e. $\left| OE\left( x\right)
\right| \leq 1$, for all $x\in V$ of the height $\leq k$. Below for any
dag-like deduction $\partial $ involved and any $x\in V$ we denote by $%
\left( \partial \right) _{x}$ the subdag rooted in $x$: 
\begin{equation*}
\left( \partial \right) _{x}=\dfrac{\left[ \left( \partial \right) _{y}:y\in
IV\left( x\right) \right] }{\ell \left( x\right) }
\end{equation*}

\emph{Basis of recursion}. $\partial _{f}^{\prime }\left( 0\right) =\partial
_{f}^{\prime }\left( 1\right) :=\partial _{f}$.

\emph{Recursion step}. $\partial _{f}^{\prime }\!\left( k\!+\!1\right) $
arises as follows by replacing subdags $\left( \!\partial _{f}^{\prime
}\!\left( k\right) \right) \!_{x}$ by new subdags $\left( \partial
_{f}^{\prime }\!\left( k\!+\!1\right) \right) \!_{x}$\negthinspace , or
collections $\left( \partial _{f}^{\prime }\left( k\!+\!1\right) \right)
\!_{x_{1}}\!,\cdots ,\left( \partial _{f}^{\prime }\left( k\!+\!1\right)
\right) \!_{x_{m}}$, for all nodes $x$ of the height $k\!+\!1$ in $\partial
_{f}^{\prime }\left( k\right) $. Consider two cases.

2.1. Suppose $\left| OE\left( x\right) \right| \leq 1$. Then let $\left(
\partial _{f}^{\prime }\left( k+1\right) \right) \!_{x}:=\left( \partial
_{f}^{\prime }\left( k\right) \right) \!_{x}$.

2.2. Suppose $\left| O\!E\left( x\right) \right| =m>1$. Consider distinct
conclusions $y_{1},\cdots ,y_{m}$ of the edges from $OE\left( x\right) $, in 
$\partial _{f}^{\prime }\left( k\right) $. Let $x_{1},\cdots ,x_{m}$\ be
identically labeled $\ell \left( x_{i}\right) :=\ell \left( x\right) $ nodes
connected with $y_{1},\cdots ,y_{m}$ by new edges $e_{1}\!:=\!\left\langle
x_{1},y_{1}\right\rangle ,\cdots \!,e_{m}\!:=\left\langle
x_{m},y_{m}\right\rangle $ supplied with the old inferences $\left\lceil
y_{1}\right\rceil ,\cdots ,\left\lceil y_{m}\right\rceil $. The incoming
edges $I\!E\left( y_{i}\right) $, $i\in \left[ m\right] $, however, might
differ from the old ones, if $\left\lceil y_{i}\right\rceil =\left( S\right) 
$. Namely, if $\left\lceil y_{i}\right\rceil =\left( S\right) $ then the new 
$I\!E\left( y_{i}\right) $ should include only the old edges from $\underset{%
\partial _{f}^{\prime }\left( k\right) \ni e^{\prime }\preceq e_{i}}{\bigcap 
}f\left( e^{\prime },e_{i},\alpha \right) $, for any $\alpha \in A\left(
e_{i}\right) $, (recall that $\left\{ e^{\prime }\in \partial _{f}^{\prime
}\left( k\right) :e^{\prime }\preceq e_{i}\right\} $ are linearly ordered).
In the cases $\left\lceil y_{i}\right\rceil \in \left\{ \left( \rightarrow
\!I\!\right) ,\left( \rightarrow E\right) ,\left( R\right) \right\} $, $%
I\!E\left( y_{i}\right) $ remains unchanged. Having done this for all $i\in %
\left[ m\right] $ we replace $\left( \!\partial _{f}^{\prime }\!\left(
k\right) \right) \!_{x}$ by the entire collection $\left( \!\partial
_{f}^{\prime }\!\left( k\!+\!1\right) \right) \!_{x_{1}},\cdots \!,\left(
\!\partial _{f}^{\prime }\!\left( k\!+\!1\right) \right) \!_{x_{m}}$, where
every $\left( \partial _{f}^{\prime }\left( k+1\right) \right) \!_{x_{i}}$
arises by the corresponding upgrade of $\left( \partial _{f}^{\prime }\left(
k\right) \right) \!_{x}$.

This completes the recursive definition of $\partial _{f}^{\prime }\left(
k\right) $, $k\leq h$, which yields the desired unfolded tree-like deduction 
$\partial ^{\prime }=\partial _{f}^{\prime }\left( h\right) \in \,$\textsc{NM%
}$_{\rightarrow }^{\text{\textsc{t}}}$ of $\rho $. Since $\partial _{f}$ is
regular, it is easily verified that all deductive paths in $\partial
_{f}^{\prime }\left( h\right) $ are closed, as they coincide with closed $f$%
-threads occurring in the underlying proof $\partial _{f}\in \,$\textsc{NM}$%
_{\rightarrow }^{\text{\textsc{d}}}$. So $\partial ^{\prime }\vdash \rho $
in $\,$\textsc{NM}$_{\rightarrow }^{\text{\textsc{t}}}\subset \,\text{%
\textsc{NM}}_{\rightarrow }^{+}$ and hence \negthinspace $\rho $ is valid by
Lemma 2.
\end{proof}

In any given regular $\partial _{f}\in \QTR{sc}{NM}_{\rightarrow }^{\text{%
\textsc{d}}}$ we assign edges $e=\left\langle x,y\right\rangle \in
O\!E\left( x\right) $ with \emph{proper assumptions} $A_{f}(e)\!\subseteq
\!A(e)$ that are defined in two steps, as follows. First recall a natural
recursive definition of $A\left( e\right) $:

\begin{enumerate}
\item  If $h(e)=h$ then $A(e):=\left\{ \ell \left( x\right) \right\} $.

\item  If $h(e)<h$ then $A(e):=\underset{e^{\prime }\in I\!E\left( x\right) 
}{\bigcup }A\left( e^{\prime }\right) $.
\end{enumerate}

Now $A_{f}(e)\!$ is defined by successive thinning of $A(e)$ recursively as
follows, where $I\!\left( e\right) :=\left\{ \alpha :\!\left\lceil
y\right\rceil =\left( \rightarrow _{\alpha }I\right) \right\} $ and $%
\overline{f}\left( e,e^{\prime },\alpha \right) :=IE\left( e^{\prime
},\alpha \right) \setminus f\left( e,e^{\prime },\alpha \right) $.

\begin{enumerate}
\item  If $h(e)=h$ then $A_{f}(e):=A(e)=\left\{ \ell \left( x\right)
\right\} $.

\item  If $h(e)<h$ then $A_{f}(e):=A(e)\setminus \left( \!I\!\left( e\right)
\cup \underset{\alpha \in A(e)\&e\prec e^{\prime }\&e^{\prime \prime }\in 
\overline{f}\left( e,e^{\prime },\alpha \right) }{\bigcup }A_{f}\left(
e^{\prime \prime }\right) \right) $.
\end{enumerate}

Also let $A_{f}(r):=\underset{e\in I\!E\left( r\right) }{\bigcup }%
A_{f}\left( e\right) $.

\begin{lemma}
For any regular $\partial _{f}\in \!\QTR{sc}{NM}_{\rightarrow }^{\text{%
\textsc{d}}}$ it holds $\partial _{f}\vdash \ell \left( r\right)
\Leftrightarrow A_{f}(r)=\emptyset $.
\end{lemma}

\begin{proof}
For any $f$-thread $\Theta =\left[ x_{0},\cdots ,x_{h}=r\right] $ and $%
e=\left\langle x_{i},x_{i+1}\right\rangle $ for $i<h$ (abbr.: $e\in \Theta )$
let $\ell \left( \Theta \right) :=\ell \left( x_{0}\right) $ and $\Theta
\!\upharpoonright _{e}:=\left[ x_{0},\cdots ,x_{i+1}\right] \!\subseteq
\Theta $. Let $O\left( \Theta \!\upharpoonright _{e}\right) $ abbreviate
that $\Theta \!\!\upharpoonright _{e}$ is open, i.e. $\left( \forall j\leq
i\right) \left\lceil x_{j}\right\rceil \!\neq \!\left( \rightarrow _{\ell
\left( x_{0}\right) }I\right) $, and let $O\left( \Theta \right) $
abbreviate that $\Theta $ is open, i.e. $\left( \forall j<h\right)
\left\lceil x_{j}\right\rceil \!\neq \!\left( \rightarrow _{\ell \left(
x_{0}\right) }I\right) $. Furthermore, for any $\alpha \in A$ denote by $%
\mathcal{T}_{f}\left( \alpha \right) $ the set of $f$-threads $\Theta $ with 
$\alpha =\ell \left( \Theta \right) $. Then for any $e\in E$ we obtain 
\begin{equation*}
A_{f}(e)=\left\{ \alpha \in A\left( e\right) :\left( \exists \,\Theta \in 
\mathcal{T}_{f}\left( \alpha \right) \right) \left( e\in \Theta
\,\&\,O\!\left( \Theta \!\upharpoonright _{e}\right) \right) \right\} .
\end{equation*}
This follows from Definition 4 by straightforward induction on $h-h\left(
e\right) $. Now 
\begin{equation*}
\begin{array}{l}
A_{f}(r)=\emptyset \Leftrightarrow \left( \forall e\in I\!E\left( r\right)
\right) A_{f}\!\left( e\right) =\emptyset \\ 
\\ 
\Leftrightarrow \left( \forall e\in I\!E\left( r\right) \right) \left(
\forall \alpha \in A\right) \left( \forall \,\Theta \in \mathcal{T}%
_{f}\left( \alpha \right) \right) \left( e\in \Theta \Rightarrow \lnot
O\!\left( \Theta \!\upharpoonright _{e}\right) \right) \\ 
\\ 
\Leftrightarrow \left( \forall \alpha \in A\right) \left( \forall \,\Theta
\in \mathcal{T}_{f}\left( \alpha \right) \right) \lnot O\!\left( \Theta
\right)
\end{array}
\end{equation*}
and hence $A_{f}(r)=\emptyset $ iff every $f$-thread in $\partial _{f}$ is
closed, i.e. $\partial _{f}\vdash \ell \left( r\right) $.
\end{proof}

Together with Theorem 5 this yields

\begin{corollary}
A purely implicational formula $\rho $ is valid in minimal logic iff there
exists a regular dag-like deduction $\partial _{f}\in \!\QTR{sc}{NM}%
_{\rightarrow }^ {\text{\textsc{d}}}$ of $\rho =\ell \left( r\right) $ such
that $A_{f}(r)=\emptyset $.
\end{corollary}

\begin{example}
\footnote{{\footnotesize This example was briefly discussed in \S 1.2. For
brevity here we drop superscripts $^{(i)}$.}} Consider a tree-like proof $%
\partial \in \,$\textsc{NM}$_{\rightarrow }$ of $\tau \rightarrow \sigma $:$%
\smallskip $ $\smallskip $ $\smallskip $

{\small \hspace{-20pt} 
\begin{tikzpicture}
  \draw (0.6,2.5) node {
    \AxiomC{$[\alpha]^{1}$}
    \AxiomC{$[\alpha \rightarrow\beta]^{2}$}
    \BinaryInfC{$\beta$}
    \UnaryInfC{$\gamma \rightarrow \beta$}    
    \UnaryInfC{$ (\alpha\rightarrow \beta) \rightarrow(\gamma \rightarrow \beta)$}
    \UnaryInfC{$\alpha\rightarrow ((\alpha\rightarrow \beta) \rightarrow(\gamma \rightarrow \beta))$}
    \UnaryInfC{$\alpha\rightarrow ((\alpha\rightarrow \beta) \rightarrow(\gamma \rightarrow \beta))$}
    \AxiomC{$[\delta]^{3}$}
    \AxiomC{$[\delta\rightarrow\beta]^{4}$}
    \BinaryInfC{$\beta$}
    \UnaryInfC{$\gamma \rightarrow \beta$}  
    \UnaryInfC{$(\delta\rightarrow \beta) \rightarrow(\gamma \rightarrow \beta))$}
    \UnaryInfC{$\delta \rightarrow ((\delta\rightarrow \beta) \rightarrow(\gamma \rightarrow \beta))$}
    \AxiomC{$[\tau]^{5}$}
    \UnaryInfC{$\left.\overset{} \tau \right.$}
    \UnaryInfC{$\left.\overset{}\tau\right.$}
    \UnaryInfC{$\left. \overset{}\tau \right.$}
    \UnaryInfC{$\left.\overset{} \tau \right.$}
   \BinaryInfC{$(\alpha\rightarrow ((\alpha \rightarrow \beta) \rightarrow (\gamma\rightarrow\beta))) \rightarrow \sigma$}
    \BinaryInfC{$\sigma$}
    \UnaryInfC{$\tau \rightarrow \sigma$}
   \UnaryInfC{$$}
    \DisplayProof
  } ;
\end{tikzpicture}
\newline
where \newline
$\tau :=(\delta \rightarrow ((\delta \rightarrow \beta )\rightarrow (\gamma
\rightarrow \beta )))\rightarrow ((\alpha \rightarrow ((\alpha \rightarrow
\beta )\rightarrow (\gamma \rightarrow \beta )))\rightarrow \sigma )$. 
\newline
} All deductive paths in $\partial $ are closed and hence \smallskip $\tau
\rightarrow \sigma $ is provable in \textsc{NM}$_{\rightarrow }$.\newline
Define a compressed dag-like deduction $\partial ^{\prime }\in \QTR{sc}{NM}%
_{\rightarrow }^{\text{\textsc{d}}}$ by merging two occurrences $\gamma
\rightarrow \beta $ using the repetition $\left( R\right) _{2}$ with
conclusion $\beta $ : \newline

{\small \hspace{-20pt} 
\begin{tikzpicture}
  \draw (0,2.5) node {
    \AxiomC{$[\alpha]^{1}$}
    \AxiomC{$[\alpha \rightarrow\beta]^{2}$}
    \LeftLabel{$\left\{ \textcolor{red}{1}\right\}$}
    \RightLabel{$\left\{ \textcolor{red}{2}\right\}$}
    \BinaryInfC{$\beta $}
    \AxiomC{$[\delta]^{3}$}
    \AxiomC{$[\delta\rightarrow\beta]^{4}$}
    \LeftLabel{$\left\{ \textcolor{red}{3}\right\}$}
    \RightLabel{$\left\{ \textcolor{red}{4}\right\}$}
    \BinaryInfC{$\beta$} 
    \LeftLabel{$\left\{ \textcolor{red}{1},\textcolor{red}{2}\right\}$}
    \RightLabel{$\left\{\textcolor{red}{3}, \textcolor{red}{4}\right\}$}
    \BinaryInfC{$\beta$}
    \LeftLabel{$\left\{\textcolor{red}{1}, \textcolor{red}{2}, \textcolor{red}{3}, \textcolor{red}{4}\right\}$}
    \UnaryInfC{$\gamma \rightarrow \beta$} 
     \LeftLabel{$\left\{ \textcolor{red}{1},\textcolor{red}{2}, \textcolor{red}{3}, \textcolor{red}{4}\right\}$}
    \UnaryInfC{$$} 
       \DisplayProof
  } ;
  \draw [->] (0,1.5) -- (2.5,1) ;
  \draw [->] (0,1.5) -- (-2.5,1) ;
  \draw (1,0) node {
      \AxiomC{}
    \LeftLabel{$\left\{  \textcolor{red}{1},\textcolor{red}{2}, 3, 4\right\}$\!}\!
    \UnaryInfC{\!$ (\alpha\!\rightarrow \!\beta) \!\rightarrow\!(\gamma \!\rightarrow \!\beta)$}
    \LeftLabel{$\left\{ \textcolor{red}{1}, \! 2, \! 3, \! 4\right\}$\!}\!
    \UnaryInfC{\!$\alpha\!\rightarrow \!((\alpha\!\rightarrow \!\beta)\! \rightarrow\!(\gamma\! \rightarrow \!\beta))$}
    \LeftLabel{$\left\{ 1, \! 2, \! 3, \! 4\right\}$\!}\!
    \UnaryInfC{$\alpha\!\rightarrow \!((\alpha\rightarrow\! \beta)\! \rightarrow\!(\gamma\! \rightarrow\! \beta))$}
    \AxiomC{}
    \RightLabel{$\left\{1, \! 2, \! \textcolor{red}{3}, \! \textcolor{red}{4}\right\}$}    
    \UnaryInfC{\!\!$(\delta\!\rightarrow\! \beta) \!\rightarrow\!(\gamma \!\rightarrow \!\beta)$}
    \LeftLabel{$\!\!\left\{ 1,  \!2,  \!\textcolor{red}{3}, \!4\right\}$\!}\! 
   \UnaryInfC{$\!\!\delta\! \rightarrow\! ((\delta\!\rightarrow\! \beta)\! \rightarrow\!(\gamma \!\rightarrow \!\beta))$}
    \AxiomC{$\QDATOP{\left[ \tau \right]^{5} }{ \,\downarrow }$}
    \RightLabel{$\left\{\textcolor{red}{5}\right\}$}
    \UnaryInfC{$\left. \overset{}\tau \right.$}
    \RightLabel{$\left\{ \textcolor{red}{5}\right\}$}
    \UnaryInfC{$\left. \overset{%
}\tau \right.$}
    \LeftLabel{$\left\{ 1, \! 2, \! 3, \! 4\right\}$\!}\!
    \RightLabel{$\left\{ \textcolor{red}{5}\right\}$}
    \BinaryInfC{$(\alpha\!\rightarrow\! ((\alpha\! \rightarrow\! \beta)\! \rightarrow\! (\gamma\!\rightarrow\!\beta))) \!\rightarrow \!\sigma$}
    \LeftLabel{$\left\{ 1, \! 2, \! 3, \! 4\right\}$\!}\!
    \RightLabel{$\left\{1,  \!2,  \!3,  \!4,  \!\textcolor{red}{5}\right\}$}
    \BinaryInfC{$\sigma$}
    \LeftLabel{$\left\{1,  \!2,  \!3,  \!4,  \!\textcolor{red}{5}\right\}$} 
    \UnaryInfC{$\tau \!\rightarrow \!\sigma$} 
    \LeftLabel{$\left\{1,  \!2,  \!3,  \!4,  \!5\right\}$} 
   \UnaryInfC{$$}
   
 \DisplayProof
  } ;
\end{tikzpicture}
\newline
} for $A=\left\{ \alpha ,\alpha \!\rightarrow \!\beta ,\delta ,\delta
\!\rightarrow \!\beta ,\tau \right\} $ (encoded as $\left\{
1,\!2,\!3,\!4,\!5\right\} $), while $f\!:E\times E\times A\!\rightarrow
\!\wp \left( E\right) $ is such that for any $e_{1}=\left\langle
x_{1},y_{1}\right\rangle \prec e_{2}=\left\langle x_{2},y_{2}\right\rangle $
exposed and any $\xi \in A\left( e_{2}\right) $, $f\!\left( e_{1},e_{2},\xi
\right) =IE\left( e_{2},\xi \right) $ if either $\left\lceil
x_{2}\right\rceil \!\neq \!\left( R\right) _{2}$ or else $e_{1}$ is one of
the inferences with conclusions $\tau $ or $\tau \rightarrow \sigma $.
Otherwise, either $f\!\left( e_{1},e_{2},\xi \right) =e^{\prime }$ or $%
f\!\left( e_{1},e_{2},\xi \right) =e^{\prime \prime }$, provided that $%
\left\lceil x_{2}\right\rceil \!=\left( R\right) _{2}$ with $e^{\prime }$
and $e^{\prime \prime }$ being respectively the left- and right-hand side
edge in $IE\left( e_{2},\xi \right) $ and $e_{1}$ is placed in $\partial
_{f}^{\prime }$ on the same left- or right-hand side under $e_{2}$. The
assumptions $A\left( e_{i}\right) $ are exposed in curly braces at the
corresponding inferences, while $A_{f}\left( e_{i}\right) $ colored red.

It is readily seen that $A_{f}(r)=\emptyset $ in $\partial _{f}^{\prime }$,
and hence $\partial _{f}^{\prime }\vdash \tau \rightarrow \sigma $, while $%
\partial $ is the tree-like unfolding of $\partial _{f}^{\prime }$. By
contrast, $\partial ^{\prime }$ (without $f$) has open paths connecting
pairs of edges supplied with the assumptions $\left\{ \text{%
\textcolor{red}{1},\textcolor{red}{2}}\right\} ,\left\{ 1,\!2,\!\text{%
\textcolor{red}{3}},\text{\textcolor{red}{4}}\right\} $ and/or $\left\{ 
\text{\textcolor{red}{3},\textcolor{red}{4}}\right\} ,\left\{ \text{%
\textcolor{red}{1},\textcolor{red}{2}},3,4\right\} $.
\end{example}

\begin{lemma}
Problem $\partial _{f}\vdash ^{?}\!\rho $ is verifiable by deterministic TM
in time polynomial in the weight of $\partial $, $\underset{x\in V}{\sum }%
\left| \ell \left( x\right) \right| $. \footnote{{\small {\footnotesize Note
that the weight of $\partial $ exceeds its size $\left| \partial \right| $%
\thinspace = the total number of nodes in $\partial $.}}}
\end{lemma}

\begin{proof}
The required $f$-regularity (cf. Definition 4) is easily verifiable in $%
\left| \partial \right| $-polynomial time using polynomial complexity of the
reachability relation $\prec $, cf. e.g. \cite{Papa}. As for $\partial
_{f}\vdash \rho $ proper, we use Lemma 6 and verify in polynomial time the
condition $A_{f}(r)=\emptyset $. To this end, it will suffice to construct a
rooted Boolean circuit $C_{f}$ of $\left| V\right| $-polynomial size with
inputs from $A$ such that for any $\alpha \in A$, $C_{f}\left( \alpha
\right) =0\Leftrightarrow \alpha \in A_{f}(r)$. For the sake of brevity
first consider the simplest ``conventional'' case where $f\!\left(
e_{1},e_{2},\alpha \right) =IE\left( e_{2},\alpha \right) $ holds for all $%
e_{1}$, $e_{2}$, $\alpha $ involved. \footnote{{\small {\footnotesize In
this case, all maximal deductive paths are closed.}}} Let $C_{0}$ be the
circuit-skeleton of $\partial $ whose leaves are assigned with the
underlying assumptions and consider any chosen $\alpha \in A$. Then let $C$
be a Boolean extension of $C_{0}$ that is obtained by assigning nodes $x\in
C_{0}$ with Boolean values $\varepsilon \left( x\right) \in \left\{
0,1\right\} $ according to the following recursive clauses 1--3 and let $%
C\left( \alpha \right) :=\varepsilon \left( r\right) $.

\begin{enumerate}
\item  If $h\left( x\right) =h$ then $\varepsilon \left( x\right) :=\left\{ 
\begin{array}{lll}
1, & \text{if} & \alpha =\ell \left( x\right) , \\ 
0, & \text{else.} & 
\end{array}
\right. $

\item  If $h\left( x\right) <h$ and $\left\lceil x\right\rceil \neq \left(
\rightarrow I\right) $, then $\varepsilon \left( x\right) :=\bigvee \left\{
\varepsilon \left( y\right) :\left\langle y,x\right\rangle \in E\right\} $.

\item  If $h\left( x\right) <h$ and $\left\lceil x\right\rceil =\left(
\rightarrow _{\beta }I\right) $ with $\left\langle y,x\right\rangle \in E$,
then \newline
$\varepsilon \left( x\right) :=\left\{ 
\begin{array}{lll}
0, & \text{if} & \alpha =\beta , \\ 
\varepsilon \left( y\right) , & \text{else.} & 
\end{array}
\right. $
\end{enumerate}

Now consider general case of arbitrary flow function $f\!:E\times E\times
A\!\rightarrow \!\wp \left( E\right) $. A desired Boolean extension of $%
C_{f} $ is defined analogously, except replacing clauses $2$ and $3$ by the
following $2^{\prime }$ and $3^{\prime }.$

\begin{description}
\item  $2^{\prime }$. If $h\left( x\right) <h$ and $\left\lceil
x\right\rceil \neq \left( \rightarrow _{\alpha }I\right) $, then

$\varepsilon \left( x\right) :=\bigvee \left\{ \varepsilon \left( y^{\prime
\prime }\right) :\left( \exists e\in IE\left( x\right) \right) \left(
\exists e^{\prime }\succ e\right) \left( \exists e^{\prime \prime
}=\left\langle y^{\prime \prime },x^{\prime \prime }\right\rangle \in
f\left( e,e^{\prime },\alpha \right) \right) \right\} .$

\item  $3^{\prime }$. If $h\left( x\right) <h$ and $\left\lceil
x\right\rceil =\left( \rightarrow _{\alpha }I\right) $, then $\varepsilon
\left( x\right) :=0$.
\end{description}

The resulting query $C_{f}\left( \alpha \right) \overset{?}{=}0$ is
decidable in $\left| \partial \right| $-polynomial time, as $h,\left|
A\right| ,\left| IE\left( x\right) \right| \leq \!\left| V\right| $, $\left|
E\right| \leq \!\left| V\right| ^{2}$ and $\left| f\right| \leq \!\left|
V\right| ^{6}$.
\end{proof}

\begin{definition}
A given tree-like or dag-like deduction with conclusion $\rho $ is called 
\emph{polynomial}, resp. \emph{quasi-polynomial}, if its weight, resp.
height plus total weight of distinct formulas involved, is polynomial in the
size of $\rho $, $\left| \rho \right| $.
\end{definition}

Our crucial result reads

\begin{theorem}[Main Theorem]
Any quasi-polynomial tree-like proof $\partial ^{\text{\textsc{t}}}$ of $%
\rho $ in \textsc{NM}$_{\rightarrow }$ is compressible into a polynomial
regular dag-like proof $\partial _{f}^{\flat }$ of $\rho $ in \textsc{NM}$%
_{\rightarrow }^{\text{\textsc{d}}}$.
\end{theorem}

The required mapping $\partial ^{\text{\textsc{t}}}\hookrightarrow \partial
_{f}^{\flat }$ is obtained by the horizontal compression\ as explained below.

\subsection{Horizontal compression}

In the sequel for any natural deduction $\partial $ we denote by $h\left(
\partial \right) $ and $\phi \left( \partial \right) $ the height of $%
\partial $ and the total weight of the set of distinct formulas occurring in 
$\partial $, respectively. Now we are prepared to explain proof of the Main
Theorem. For any quasi-polynomial tree-like proof $\partial ^{\text{\textsc{t%
}}}$ of $\rho \,$ let $\partial ^{\prime }$ be its horizontal compression
defined by bottom-up recursion on $h\left( \partial ^{\text{\textsc{t}}%
}\right) $ such that for any $k\leq h\left( \partial ^{\text{\textsc{t}}%
}\right) $, the $k^{th}$\ horizontal section of $\partial ^{\prime }$ is
obtained by merging all nodes with identical formulas occurring in the $%
k^{th}$\ horizontal section of $\partial ^{\text{\textsc{t}}}$. The
inferences in $\partial ^{\prime }$ are naturally inherited by the ones in $%
\partial ^{\text{\textsc{t}}}$.\thinspace \footnote{{\small {\footnotesize %
See \cite{GH1}.}}} Obviously $\partial ^{\prime }$ is a dag-like (not
necessarily tree-like anymore) deduction with conclusion $\rho $. Moreover $%
\partial ^{\prime }$ is polynomial as $\left| \partial ^{\prime }\right|
\leq h\left( \partial ^{\text{\textsc{t}}}\right) \times \phi \left(
\partial ^{\text{\textsc{t}}}\right) $. However, $\partial ^{\prime }$ need
not preserve basic inferences $\left( \rightarrow I\right) $, $\left(
\rightarrow E\right) $. For example, a compressed multipremise configuration 
\begin{equation*}
\fbox{$\ \left( \rightarrow I,E\right) ^{\prime }:\dfrac{\beta \quad \quad
\gamma \quad \quad \gamma \rightarrow \left( \alpha \rightarrow \beta
\right) }{\alpha \rightarrow \beta }$}
\end{equation*}
that is obtained by merging identical conclusions $\alpha \rightarrow \beta $
of 
\begin{equation*}
\fbox{$\ \left( \rightarrow I\right) :\dfrac{\beta }{\alpha \rightarrow
\beta }$}\quad \text{and\quad }\fbox{$\ \left( \rightarrow E\right) :\dfrac{%
\gamma \quad \quad \gamma \rightarrow \left( \alpha \rightarrow \beta
\right) }{\alpha \rightarrow \beta }$}
\end{equation*}
is not a correct inference in \textsc{NM}$_{\rightarrow }$ and it should be
replaced by a \textsc{NM}$_{\rightarrow }^{\text{\textsc{d}}}$ inference 
\begin{equation*}
\fbox{$\!\left( S\right) \ \dfrac{\dfrac{\beta }{\alpha \rightarrow \beta }%
\quad \quad \dfrac{\gamma \quad \quad \gamma \rightarrow \!\left( \alpha
\!\rightarrow \!\beta \right) }{\alpha \rightarrow \beta }}{\alpha
\rightarrow \beta }$}.
\end{equation*}
So in order to obtain a correct dag-like deduction we replace all
multipremise configurations by the corresponding instances of $\left(
S\right) $, which converts $\partial ^{\prime }$ into a locally correct
dag-like deduction $\partial ^{\text{\textsc{d}}}\in \,$\textsc{NM}$%
_{\rightarrow }^{\text{\textsc{d}}}$ with the same conclusion $\rho $. Let $%
h:=h\left( \partial ^{\text{\textsc{d}}}\right) $. The weight of $\partial ^{%
\text{\textsc{d}}}$ is still polynomial, as $h\leq 2h\left( \partial ^{\text{%
\textsc{t}}}\right) \ $and $\left| \partial ^{\text{\textsc{d}}}\right| \leq
h\times \phi ^{2}\left( \partial ^{\text{\textsc{t}}}\right) $. Note that
arbitrary deductive dag-like paths in $\partial ^{\text{\textsc{d}}}$ can
arise by concatenating different segments of maximal tree-like deductive
paths in $\partial $ (which are closed by the assumption). Let $\mathcal{F}^{%
\text{\textsc{d}}}$ be the dag-like image of the set of these maximal
tree-like deductive paths under the mapping $\partial \hookrightarrow
\partial ^{\text{\textsc{d}}}$. We observe that $\mathcal{F}^{\text{\textsc{d%
}}}$ satisfies the following three conditions of \emph{local coherency},
where for any maximal dag-like deductive path $\Theta =\left[ x_{0},\cdots
,x_{h}=r\right] \in \mathcal{F}^{\text{\textsc{d}}}$ in $\partial ^{\text{%
\textsc{d}}}$ and any $i<h$ we let $\Theta \!\downharpoonright _{i}:=\left[
x_{i},\cdots ,x_{h}\right] $. As above,\ we call $\Theta $ \emph{closed} if $%
\left( \exists i<h\right) \!\left\lceil x_{i}\right\rceil \!=\!\left(
\rightarrow _{\ell \left( x_{0}\right) }I\right) $.

\begin{enumerate}
\item  $\mathcal{F}^{\text{\textsc{d}}}$ is \emph{dense} in $\partial ^{%
\text{\textsc{d}}}$, i.e. $\left( \forall x\in V\left( \partial ^{\text{%
\textsc{d}}}\right) \right) \left( \exists \Theta \in \mathcal{F}^{\text{%
\textsc{d}}}\right) \left( x\in \Theta \right) $.

\item  $\mathcal{F}^{\text{\textsc{d}}}$ is \emph{closed}, i.e. so is every $%
\Theta \in \mathcal{F}^{\text{\textsc{d}}}$.

\item  $\mathcal{F}^{\text{\textsc{d}}}$ \emph{preserves} $\left(
\rightarrow E\right) $, i.e. for any $i<h$ and $x_{i},x_{i+1}\in \Theta \in 
\mathcal{F}^{\text{\textsc{d}}}$ such that $\left\lceil x_{i+1}\right\rceil
=\left( \rightarrow E\right) $ with $I\!E\left( x_{i+1}\right) \!=\!\left\{
\left\langle x_{i},x_{i+1}\right\rangle ,\left\langle y,x_{i+1}\right\rangle
\right\} $ (i.e. $x_{i}\!\neq \!y$ are the \emph{children }of $x_{i+1}$),
there is a $\Theta ^{\prime }\in \mathcal{F}^{\text{\textsc{d}}}$ such that $%
y\in \Theta ^{\prime }$ and $\Theta \!\downharpoonright _{i+1}=\Theta
^{\prime }\!\downharpoonright _{i+1}.$
\end{enumerate}

Now for any given subdeduction $\partial ^{\flat }\subseteq \partial ^{\text{%
\textsc{d}}}$, any subset $\mathcal{F}^{\flat }\subseteq \mathcal{F}^{\text{%
\textsc{d}}}$ that is dense (in $\partial ^{\text{\textsc{d}}}$), closed and
preserves $\left( \rightarrow E\right) $ is called a \emph{fundamental set
of paths} (abbr.: \emph{fsp}) in $\partial ^{\flat }$. To complete our proof
of the Main Theorem it will suffice to prove

\begin{lemma}
There exist a $\partial ^{\flat }\subseteq \partial ^{\text{\textsc{d}}}$
with a flow $\!f:E^{\flat }\times E^{\flat }\times A^{\flat }\!\rightarrow
\!\wp \left( E^{\flat }\right) $ and a fsp $\mathcal{F}^{\flat }\subseteq 
\mathcal{F}^{\text{\textsc{d}}}$ in $\partial ^{\flat }$ such that $\partial
_{f}^{\flat }$ is regular and has no open $f$-threads.
\end{lemma}

\begin{proof}
The required $\mathcal{F}^{\flat }$ is defined by simultaneous ascending
recursion. To put it more exactly, we let $\mathcal{F}^{\flat }:=\overset{h}{%
\underset{k=1}{\bigcup }}\mathcal{F}_{k}^{\flat }$ where $\mathcal{F}%
_{k}^{\flat }\subseteq \mathcal{F}^{\text{\textsc{t}}}$ arise by recursion
on $0<k<h$ such that $\mathcal{F}_{k}^{\flat }\subseteq \mathcal{F}%
_{k+1}^{\flat }$ and for any $\Theta ^{\prime }\in \mathcal{F}_{k+1}^{\flat
} $ there exists $\Theta \in \mathcal{F}_{k}^{\flat }$\ with $\Theta
\!\downharpoonright _{k+1}=\Theta ^{\prime }\!\downharpoonright _{k+1}$.
Having done this we stipulate $\partial ^{\flat }$ by collecting all edges
occurring in $\mathcal{F}^{\flat }$ and denote by $A^{\flat }$ the set of
assumptions occurring in $\mathcal{F}^{\flat }$. Turning back to $\mathcal{F}%
_{n}^{\flat }$ the recursion runs as follows.

\emph{Basis of recursion}. Suppose $k=1$. We have $\left\lceil r\right\rceil
\in \left\{ \left( \rightarrow I\right) ,\left( \rightarrow E\right)
\right\} $. If $\left\lceil r\right\rceil =\left( \rightarrow I\right) $
with $I\!E\left( r\right) =\left\{ \left\langle x,r\right\rangle \right\} $\
then (by the density of $\mathcal{F}^{\text{\textsc{d}} })$ choose any $%
\Theta \in \mathcal{F}^{\text{\textsc{d}}}$ with $x\in \Theta$ and let $%
\mathcal{F}_{1}^{\flat }:=\left\{ \Theta \right\} $. Otherwise, if $%
\left\lceil r\right\rceil =\left( \rightarrow E\right) $ with $I\!E\left(
r\right) =\left\{ \left\langle x,r\right\rangle ,\left\langle
y,r\right\rangle \right\} $ then by the same token let $\mathcal{F}%
_{1}^{\flat }:=\left\{ \Theta _{1},\Theta _{2}\right\} $, for any chosen $%
\Theta _{1},\Theta _{2}\in \mathcal{F}^{\ast }$ with $x\in \Theta _{1},y\in
\Theta _{2}$.

\emph{Recursion step. }Suppose $0<k<h$ and let $i:=h-k$. Then $\mathcal{F}%
_{k+1}^{\flat }$ extends $\mathcal{F}_{k}^{\flat }$ in the following way.
Consider any $\Theta =\left[ x_{0},\cdots ,x_{h}=r\right] \in \mathcal{F}%
_{k}^{\flat }$\ and suppose that $\left\lceil x_{i+1}\right\rceil \neq
\left( \rightarrow E\right) $. Then let $\Theta \in \mathcal{F}_{k+1}^{\flat
}$. Otherwise, if $\left\lceil x_{i+1}\right\rceil =\left( \rightarrow
E\right) $ with $I\!E\left( x_{i+1}\right) =\left\{ \left\langle
x_{i},x_{i+1}\right\rangle ,\left\langle y,x_{i+1}\right\rangle \right\} $
then (since $\mathcal{F}^{\text{\textsc{d}}}$ preserves $\left( \rightarrow
E\right) $) choose any $\Theta ^{\prime }\in \mathcal{F}^{\text{\textsc{d}}}$
such that $y\in \Theta ^{\prime }$ and $\Theta \!\downharpoonright
_{i+1}=\Theta ^{\prime }\!\downharpoonright _{i+1}$ and hen let $\Theta
^{\prime }\in \mathcal{F}_{k+1}^{\flat }$.

This completes the recursive definition of the \emph{fsp} $\mathcal{F}%
^{\flat }$ and corresponding $\partial ^{\flat }$ (see above). To get the
required flow $\!f:E^{\flat }\times E^{\flat}\times A^{\flat }\!\rightarrow
\!\wp \left( E^{\flat }\right) $ consider any $e_{1}\preceq e_{2}$ and $%
\alpha \in A\left( e_{2}\right) $ and set $f(e_{1},e_{2},\alpha ):=\smallskip \\
\left\{ e\in E^{\flat }:\left. 
\begin{array}{c}
\left( \exists \Theta =\left[ x_{0},\cdots ,x_{h}=r\right] \in \mathcal{F}%
^{\flat }\right) \left( \exists 0<i\leq j<h\right) \\ 
e_{1}=\left\langle x_{j},x_{j+1}\right\rangle \,\&\, \, e_{2}=\left\langle
x_{i},x_{i+1}\right\rangle \,\&\, \, e=\left\langle
x_{i-1},x_{i}\right\rangle \,\&\,\,\alpha =\ell \left( x_{0}\right)
\end{array}
\right. \right\} .$

Regular correctness of $\partial _{f}^{\flat }$ is verified by
straightforward induction on the depth of $x$. To complete the whole proof
it remains to observe that by the definition of $\mathcal{F}^{\flat }$,
every $f$-thread $\Theta $ in $\partial _{f}^{\flat }$ belongs to $\mathcal{F%
}^{\text{\textsc{d}}}$, thus being closed, as required.
\end{proof}

This lemma completes our proof of Theorem 11.

\section{Consequences for computational complexity}

Consider the well-known complexity classes of computational problems \textbf{%
NP}, \textbf{coNP} and \textbf{PSPACE} (see \S 1.1). Recall that 
\textbf{NP} and \textbf{coNP} are both contained in \textbf{PSPACE},
although precise comparison between these classes is often referred to as
fundamental open problem (cf. e.g. \cite{arora}, \cite{Papa}). It is known,
however, that \textbf{PSPACE} can be characterized by the provability in
minimal logic ( \cite{Statman}, \cite{Svejdar}) and HAMILTON PATH problem is 
\textbf{NP} complete, while $\mathbf{NP=coNP}$ is a consequence of $\mathbf{%
NP=PSPACE}$ (see e.g. \cite{arora}, \cite{Papa}). Our proof of the
equalities $\mathbf{NP=coNP=PSPACE}$ use these results. In fact, by the last
one it will suffice to prove the main equality $\mathbf{NP=PSPACE}$ (however
see \S 3.2 below).

\subsection{ General case $\mathbf{NP}$ vs $\mathbf{PSPACE}$}

Recall that we consider standard language of minimal logic whose formulas ($%
\alpha $, $\beta $, $\gamma $, $\rho $ etc.) are built up from propositional
variables ($p$, $q$, $r$, etc.) using one propositional connective `$%
\rightarrow $'. The sequents are in the form $\Gamma \Rightarrow \alpha $\
whose antecedents, $\Gamma $,\ are viewed as multisets of formulas; sequents 
$\Rightarrow \alpha $\ , i.e. $\emptyset \Rightarrow \alpha $, are
identified with formulas $\alpha $.

Our special sequent calculus for minimal logic, \textsc{LM}$_{\rightarrow }$%
, includes the following axioms $\left( \text{\textsc{M}}A\right) $ and
inference rules $\left( \text{\textsc{M}}I1\rightarrow \right) $, $\left( 
\text{\textsc{M}}I2\rightarrow \right) $, $\left( \text{\textsc{M}}%
E\rightarrow P\right) $, $\left( \text{\textsc{M}}E\rightarrow \rightarrow
\right) $ in the language $\mathcal{L}_{\rightarrow }$ (the constraints are
shown in square brackets). \footnote{{\small {\footnotesize This is a
slightly modified, equivalent version of the corresponding \ purely
implicational and }$\bot ${\footnotesize -free} {\footnotesize subsystem of
Hudelmaier's intuitionistic calculus \textsc{LG}, cf. \cite{Hud}. The
constraints }$q\in VAR\left( \Gamma ,\gamma \right) ${\footnotesize \ are
added just for the sake of transparency.}}}\medskip

$\fbox{$\left( \text{\textsc{M}}A\right) :\ \ \ \Gamma ,p\Rightarrow p$}$

$\fbox{$\left( \text{\textsc{M}}I1\!\rightarrow \right) :\ \ \ \dfrac{\Gamma
,\alpha \Rightarrow \beta }{\Gamma \Rightarrow \alpha \rightarrow \beta }%
\smallskip \quad \left[ \left( \nexists \gamma \right) :\left( \alpha
\rightarrow \beta \right) \rightarrow \gamma \in \Gamma \right] $}$

$\fbox{$\left( \text{\textsc{M}}I2\!\rightarrow \right) :\ \ \ \dfrac{\Gamma
,\alpha ,\beta \rightarrow \gamma \Rightarrow \beta }{\Gamma ,\left( \alpha
\rightarrow \beta \right) \rightarrow \gamma \Rightarrow \alpha \rightarrow
\beta }$}$

$\fbox{$\left( \text{\textsc{M}}E\!\rightarrow \!P\right) :\ \ \ \dfrac{%
\Gamma ,p,\gamma \Rightarrow q}{\Gamma ,p,p\rightarrow \gamma \Rightarrow q}%
\quad \left[ q\in \mathrm{VAR}\left( \Gamma ,\gamma \right) ,p\neq q\right] $%
}$

$\fbox{$\left( \text{\textsc{M}}E\!\rightarrow \rightarrow \right) :\ \ \ 
\dfrac{\Gamma ,\alpha ,\beta \rightarrow \gamma \Rightarrow \beta \quad
\quad \Gamma ,\gamma \Rightarrow q}{\Gamma ,\left( \alpha \rightarrow \beta
\right) \rightarrow \gamma \Rightarrow q}\quad \left[ q\in \mathrm{VAR}%
\left( \Gamma ,\gamma \right) \right] $}$

\begin{claim}
\textsc{LM}$_{\rightarrow }$ is sound and complete with respect to minimal
propositional logic and tree-like deducibility and any purely implicational
formula $\rho $ is valid in minimal logic iff corresponding sequent $%
\Rightarrow \rho $ is provable by a quasi-polynomial tree-like derivation
(deduction) in \textsc{LM}$_{\rightarrow }$.
\end{claim}

\begin{proof}
This easily follows from \cite{Hud} (see \cite{GH1} for details).
\end{proof}

\begin{lemma}
For any \textsc{LM}$_{\rightarrow }$-valid purely implicational formula $%
\rho $ there exists a quasi-polynomial tree-like proof of $\rho$ in \textsc{%
NM}$_{\rightarrow }$.
\end{lemma}

\begin{proof}
This follows by a straightforward interpretation in \textsc{NM}$%
_{\rightarrow }$ of a proof in \textsc{LM}$_{\rightarrow }$ that must exist
by the validity of $\rho $ (see \cite{GH1}, \cite{GH2} for details).
\end{proof}

\begin{theorem}
$\mathbf{PSPACE}\subseteq \mathbf{NP}$ and hence $\mathbf{NP=PSPACE}$ holds
true.
\end{theorem}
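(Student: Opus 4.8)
The plan is to establish $\mathbf{PSPACE}\subseteq\mathbf{NP}$ by exhibiting, for every minimal tautology $\rho$, a polynomial-weight dag-like proof of $\rho$ that serves as a polynomial-time-checkable certificate. Since the validity problem for minimal propositional logic is $\mathbf{PSPACE}$-complete (by \cite{Statman}, \cite{Svejdar}), it suffices to show that the canonical $\mathbf{PSPACE}$-complete problem---deciding membership in the set of minimal tautologies---lies in $\mathbf{NP}$. The witness for ``$\rho$ is valid'' will be a dag-like proof $\partial^{\ast}\vdash\rho$ in $\textsc{NM}_{\rightarrow}^{\text{\textsc{d}}}$ whose weight is polynomial in $\left|\rho\right|$, together with the auxiliary function $f$; by Lemma~5 the correctness of such a certificate is verifiable by a deterministic TM in time polynomial in $\left|\partial^{\ast}\right|$, and hence polynomial in $\left|\rho\right|$.

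\textbf{The chain of constructions} proceeds as follows. First, given a valid $\rho$, Claim~14 guarantees that the sequent $\Rightarrow\rho$ is provable in Hudelmaier's cutfree calculus $\textsc{LM}_{\rightarrow}$ by a \emph{quasi-polynomial} tree-like deduction---that is, one whose height and total weight of distinct formulas are both polynomial in $\left|\rho\right|$. The crucial feature here is that Hudelmaier's calculus admits proofs of height \emph{linear} in the weight of the conclusion, and its subformula property bounds the number of distinct formulas polynomially, even though the full tree-like proof may have exponential weight. Second, Lemma~15 transfers this to $\textsc{NM}_{\rightarrow}$: the straightforward natural-deduction interpretation of the $\textsc{LM}_{\rightarrow}$ derivation yields a quasi-polynomial tree-like proof $\partial\vdash\rho$ in $\textsc{NM}_{\rightarrow}$ (not necessarily normal, but that is irrelevant---only quasi-polynomiality matters). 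Third, I invoke the Main Theorem (Theorem~8): any quasi-polynomial tree-like proof in $\textsc{NM}_{\rightarrow}$ compresses into a \emph{polynomial} dag-like proof $\partial^{\ast}\vdash\rho$ in $\textsc{NM}_{\rightarrow}^{\text{\textsc{d}}}$ via the horizontal compression of \S2.3.

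\textbf{Assembling the inclusion}, I conclude that for every valid $\rho$ there is a polynomial-weight dag-like proof whose validity is checkable in deterministic polynomial time (Lemma~5), so the set of minimal tautologies is in $\mathbf{NP}$. Because this set is $\mathbf{PSPACE}$-complete and $\mathbf{NP}$ is closed under polynomial-time reductions, every $\mathbf{PSPACE}$ language reduces to it and therefore lies in $\mathbf{NP}$, giving $\mathbf{PSPACE}\subseteq\mathbf{NP}$. The reverse inclusion $\mathbf{NP}\subseteq\mathbf{PSPACE}$ is classical, so $\mathbf{NP}=\mathbf{PSPACE}$ follows.

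\textbf{The main obstacle}---already discharged by the lemmas cited---is the tension between height and weight: the tree-like $\textsc{NM}_{\rightarrow}$ proof $\partial$ is genuinely of exponential weight, so na\"{\i}vely presenting $\partial$ as a certificate fails. Everything hinges on the quasi-polynomial bound surviving each translation step (Claim~14 $\to$ Lemma~15) so that the horizontal compression of Theorem~8 can be applied; the linear-height property of Hudelmaier's calculus is what makes this bound available in the first place. Once the quasi-polynomial tree-like $\textsc{NM}_{\rightarrow}$ proof is in hand, the compression machinery of \S2.3 does the rest mechanically, exactly as in the $\mathbf{NP}=\mathbf{coNP}$ case, the only difference being the source of the tree-like proof (Hudelmaier derivations of arbitrary minimal tautologies rather than the specific Hamiltonian formulas $\rho_{G}$).
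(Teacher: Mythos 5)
Your proposal is correct and follows essentially the same route as the paper: reduce $\mathbf{PSPACE}\subseteq\mathbf{NP}$ to showing that the ($\mathbf{PSPACE}$-complete) minimal validity problem is in $\mathbf{NP}$, then certify each valid $\rho$ by chaining the quasi-polynomial Hudelmaier derivation (the paper's Claim 15), its translation into a quasi-polynomial tree-like \textsc{NM}$_{\rightarrow}$ proof (the paper's Lemma 16), the horizontal compression of Theorem 8, and the polynomial-time verifiability of dag-like proofs (the paper's Lemma 6). Your reference numbers are shifted by one relative to the paper's (your Claim 14, Lemma 15, Lemma 5 are its Claim 15, Lemma 16, Lemma 6), but the content of each cited result is exactly the one you intend, so the argument matches the paper's own proof.
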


\begin{proof}
The inclusion $\mathbf{NP\subseteq PSPACE}$ is well known (cf. e.g. \cite
{arora}, \cite{Papa}). Hence it suffices to prove $\mathbf{PSPACE}\subseteq 
\mathbf{NP}$. To this end we recall that minimal propositional logic is 
\textbf{PSPACE}-complete (cf. e.g. \cite{Joh}, \cite{Statman}, \cite{Svejdar}%
). Thus by Claim 13, in order to\ arrive at $\mathbf{PSPACE}\subseteq 
\mathbf{NP}$ it will suffice to show that the validity problem VAL in
minimal logic is in $\mathbf{NP}$. In particular, by Lemma 6, it will
suffice to show that for any given purely implicational formula $\rho $ that
is valid in \textsc{LM}$_{\rightarrow }$ there exists a polynomial regular
dag-like proof of $\rho $\ in \textsc{NM}$_{\rightarrow }$. But this follows
from Theorem 8 together with Lemma 14.

Summing up, we have shown that for any proof in \textsc{LM}$_{\rightarrow }$
of a given formula $\rho $ there exist, successively: (1) a quasi-polynomial
tree-like proof $\partial ^{\text{\textsc{t}}}\in \,$\textsc{NM}$%
_{\rightarrow }^{\text{\textsc{t}}}$ of $\rho $, (2) a polynomial dag-like
deduction $\partial ^{\text{\textsc{d}}}\in \,$\textsc{NM}$_{\rightarrow }^{%
\text{\textsc{d}}}$ of $\rho $, (3) a subdeduction (hence polynomial) $%
\partial ^{\flat }\subseteq \partial ^{\text{\textsc{d}}}$ with a suitable
flow $f$ that eventually yields a desired polynomial regular dag-like proof $%
\partial _{f}^{\ast }:=\partial _{f}^{\flat }$ of $\rho $. Since provability
(= validity) in \textsc{LM}$_{\rightarrow }$ is $\mathbf{PSPACE}$-complete
and $\partial _{f}^{\flat }$ has polynomial certificate (cf. Lemma 9), this
completes our proof of $\mathbf{PSPACE}\subseteq \mathbf{NP}$.
\end{proof}

\begin{corollary}
The satisfiability and validity problems in quantified boolean logic (QBL)
are both $\mathbf{NP}$-complete. Moreover $\mathbf{BQP\subseteq NP}$ holds
true, where $\mathbf{BQP}$ is the class of problems computable in quantum
polynomial time.
\end{corollary}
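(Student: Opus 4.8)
The plan is to reduce both assertions entirely to Theorem 17 together with two classical external facts, so that the corollary becomes pure bookkeeping. First I would recall that the truth problem for quantified boolean formulas is $\mathbf{PSPACE}$-complete (Stockmeyer--Meyer; cf. \cite{arora}, \cite{Papa}). Concretely, for a QBL-formula $\varphi$ with free variables $x_{1},\ldots ,x_{n}$, satisfiability amounts to deciding the truth of the closed formula $\exists x_{1}\cdots \exists x_{n}\,\varphi$, while validity amounts to deciding the truth of $\forall x_{1}\cdots \forall x_{n}\,\varphi$. Prepending one block of quantifiers to a fully quantified formula is a polynomial-time transformation and keeps the sentence within the same class; conversely, a closed QBF has no free variables, so its satisfiability (resp. validity) coincides with its truth. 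Hence both the satisfiability and the validity problem for QBL are $\mathbf{PSPACE}$-complete.

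Next I would invoke Theorem 17 in the form $\mathbf{NP}=\mathbf{PSPACE}$ to upgrade $\mathbf{PSPACE}$-completeness to $\mathbf{NP}$-completeness. Membership is immediate: a language in $\mathbf{PSPACE}$ lies, by the equality, in $\mathbf{NP}$, so each QBL problem is in $\mathbf{NP}$. For hardness one observes that any $L\in \mathbf{NP}$ satisfies $L\in \mathbf{PSPACE}$ (indeed $\mathbf{NP}\subseteq \mathbf{PSPACE}$), so $L$ polynomial-time many-one reduces to the $\mathbf{PSPACE}$-complete QBL problem; by transitivity of polynomial-time reductions each QBL problem is therefore $\mathbf{NP}$-hard. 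Combining membership and hardness yields that the satisfiability and validity problems for QBL are both $\mathbf{NP}$-complete.

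Finally, for the inclusion $\mathbf{BQP}\subseteq \mathbf{NP}$ I would cite the classical containment $\mathbf{BQP}\subseteq \mathbf{PSPACE}$ of Bernstein and Vazirani (cf. \cite{arora}): a bounded-error polynomial-time quantum computation can be simulated in polynomial space by summing transition amplitudes over computational paths, reusing space path by path. Chaining this with Theorem 17 gives $\mathbf{BQP}\subseteq \mathbf{PSPACE}=\mathbf{NP}$, as required.

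There is no genuine combinatorial obstacle in this corollary; every hard step has already been done in Theorem 17. The only point demanding care is the transfer of $\mathbf{PSPACE}$-hardness to $\mathbf{NP}$-hardness: one must keep the reductions polynomial-time and many-one so that $\mathbf{NP}$-hardness is witnessed by the correct notion of reduction, and one must verify that wrapping a QBL-formula in an outer quantifier block to pass between the ``formula with free variables'' and the ``closed sentence'' formulations is genuinely complexity-preserving in both directions.
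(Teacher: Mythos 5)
Your proposal is correct and follows essentially the same route as the paper's own (much terser) proof: both reduce the corollary to the classical facts that QBL is $\mathbf{PSPACE}$-complete and $\mathbf{BQP}\subseteq \mathbf{PSPACE}$, and then apply Theorem 17 ($\mathbf{NP}=\mathbf{PSPACE}$) to convert $\mathbf{PSPACE}$-completeness into $\mathbf{NP}$-completeness and to chain the $\mathbf{BQP}$ inclusion. The only difference is that you spell out the bookkeeping (membership, hardness via transitivity of polynomial-time many-one reductions, and the free-variable versus closed-sentence formulations) which the paper leaves implicit.
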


\begin{proof}
The former assertion follows from the $\mathbf{PSPACE}$-completeness of QBL
(see e.g. \cite{arora}, \cite{Papa}). The latter follows from $\mathbf{%
BQP\subseteq PSPACE}$ (cf. \cite{arora}).
\end{proof}

\subsection{Practical implementation}

Horizontal compression of a given tree-like ND proof $\partial $ can be
executed by a deterministic TM in time polynomial in the weight of $\partial 
$; in most interesting cases the compressed dag-like deduction $\partial ^{%
\text{\textsc{d}}}$ will be exponentially smaller than $\partial$ (cf. \cite
{Haeus1}). This procedure is being implemented in LEAN (\cite{Rob}, work in
progress). However, the subsequent operation $\partial ^{\text{\textsc{d}}%
}\hookrightarrow \partial_{f}^{\flat }$ (cf. Lemma 12) is less constructive
due to nondeterministic condition 3 of the local coherency of $\mathcal{F}^{%
\text{\textsc{d}}}$ involved. To put it bluntly, our proof of crucial
inclusion $\mathbf{PSPACE}\subseteq \mathbf{NP}$ appears less deterministic
than (say) standard proof of VAL\,$\in \mathbf{NP}$.

\subsection{Particular case $\mathbf{NP}$ vs $\mathbf{coNP}$}

Our proof of the main result $\mathbf{NP\!=\!PSPACE}$ (above) referred to
special sequent calculus for minimal logic, \textsc{LM}$_{\rightarrow }$. In
the case of weaker equality $\mathbf{NP=coNP}$ this reference is obsolete.
Instead, we'll refer to tree-like ND-normalization and recall that \emph{%
normal} tree-like ND proofs satisfy weak subformula property (see \S 1.2).
Moreover, the following holds (see \cite{GH3}, \cite{Haeus}).

\begin{lemma}
Any normal\ tree-like \textsc{NM}$_{\rightarrow }$-proof of $\rho $ whose
height is polynomial in $\left|\rho \right| $ is quasi-polynomial.
\end{lemma}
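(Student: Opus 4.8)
The plan is to reduce everything to the subformula property. Since the definition of \emph{quasi-polynomial} (Definition~6) asks that $h\left( \partial \right) +\phi \left( \partial \right) $ be polynomial in $\left| \rho \right| $, and the hypothesis already supplies a polynomial bound on the height $h\left( \partial \right) $, it remains only to bound $\phi \left( \partial \right) $, the total weight of the set of distinct formulas occurring in $\partial $. So the whole lemma comes down to showing that the distinct formulas appearing in a normal proof are few and small.

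First I would invoke the weak subformula property for normal \textsc{NM}$_{\rightarrow }$-proofs quoted in the excerpt: every formula occurring on a maximal (leaf-to-root) deductive path $\Theta =\left[ x_{0},\cdots ,x_{h}=r\right] $ is a subformula of the conclusion $\rho $. Since $\partial $ is a tree, every node, hence every formula occurrence, lies on at least one such path from some leaf down to the root. Consequently every \emph{distinct} formula occurring anywhere in $\partial $ is a subformula of $\rho $.

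Next I would count. The distinct subformulas of a purely implicational $\rho $ correspond to a subset of the node positions of its parse tree, so there are at most $\left| \rho \right| $ of them, and each has weight at most $\left| \rho \right| $. Hence $\phi \left( \partial \right) \leq \left| \rho \right| \cdot \left| \rho \right| =\left| \rho \right| ^{2}$, which is polynomial in $\left| \rho \right| $. Adding the assumed polynomial bound on $h\left( \partial \right) $ shows that $h\left( \partial \right) +\phi \left( \partial \right) $ is polynomial in $\left| \rho \right| $, i.e. $\partial $ is quasi-polynomial.

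The only point requiring care, and the place where the argument is least automatic, is the passage from the \emph{weak} subformula property (stated for formulas lying on maximal paths) to the claim that \emph{all} distinct occurrences are subformulas of $\rho $. This is handled by the tree observation above; in particular a discharged assumption $\alpha $ is caught either directly, as a leaf sitting on some full leaf-to-root path, or as a subformula of the corresponding $\left( \rightarrow _{\alpha }I\right) $-conclusion $\alpha \rightarrow \beta $, which is itself a subformula of $\rho $. I would also stress what is \emph{not} needed: no bound on the number of nodes, i.e. on the weight $\left| \partial \right| $, enters the estimate. This is essential, since a quasi-polynomial tree-like proof may still have exponentially many nodes, and it is precisely that gap which the horizontal compression of the Main Theorem is designed to close.
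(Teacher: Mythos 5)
Your proposal is correct and takes essentially the same route as the paper, which states this lemma as an immediate consequence of the weak subformula property of normal deductions: every distinct formula in $\partial$ is a subformula of $\rho$, so $\phi\left( \partial \right) \leq \left| \rho \right| ^{2}$, and adding the assumed polynomial bound on $h\left( \partial \right)$ gives quasi-polynomiality per the paper's definition. Your closing remarks (handling of discharged assumptions, and the observation that no bound on the node count $\left| \partial \right|$ is needed or available) are accurate and consistent with the paper's intent.
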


Now let purely implicational formula $\rho $ express in standard way that a
given graph $G$ has no Hamiltonian cycles. Thus $\rho $ is valid iff
corresponding NP-complete problem HAMILTON PATH has no positive solution. We
also observe that the canonical proof search of $\rho $ in \textsc{NM}$%
_{\rightarrow }$ yields a normal tree-like proof $\partial $ whose height is
polynomial in $\left| G\right| $ (and hence in $\left| \rho \right| $),
provided that $G$ is non-Hamiltonian. Further we argue via

\begin{lemma}[cf. \protect\cite{GH3}]
Consider the HAMILTON PATH problem and a purely implicational formula $\rho$
expressing that a given graph $G$ has no Hamiltonian cycles. There exists a
normal tree-like \textsc{NM}$_{\rightarrow }$-proof $\partial$ of $\rho$
such that $h\left( \partial \right) $ is polynomial in $\left| G\right| $
(and in $\left| \rho\right| $), provided that $G$ is non-Hamiltonian.
\end{lemma}

Recall that polynomial ND proofs (whether tree-like or regular dag-like)
have polynomial-time certificates, while the non-hamiltoniancy of simple and
directed graphs is $\mathbf{coNP}$-complete. Hence Theorem 11 yields

\begin{corollary}
$\mathbf{NP=coNP}$\textbf{\ }holds true.
\end{corollary}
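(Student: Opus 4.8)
The plan is to certify membership of the $\mathbf{coNP}$-complete non-hamiltoniancy language in $\mathbf{NP}$ and then invoke the standard closure-under-complement argument. Fix the language $L$ of encodings of simple directed graphs $G$ having no Hamiltonian cycle; under the standard reduction $G\mapsto \rho _{G}$ we have that $G\in L$ iff $\rho _{G}$ is valid in minimal logic, and this problem is $\mathbf{coNP}$-complete. The $\mathbf{NP}$-witness for $G\in L$ will be a polynomial-weight dag-like proof of $\rho _{G}$.

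First I would construct the certificate. Given a non-Hamiltonian $G$, the preceding lemma furnishes a \emph{normal} tree-like \textsc{NM}$_{\rightarrow }$-proof $\partial $ of $\rho _{G}$ whose height $h\left( \partial \right) $ is polynomial in $\left| G\right| $, and hence in $\left| \rho _{G}\right| $. Since $\partial $ is normal and of polynomial height, the quasi-polynomiality lemma (which uses the weak subformula property) shows that $\partial $ is quasi-polynomial. The Main Theorem then compresses $\partial $ into a polynomial dag-like proof $\partial ^{\ast }\vdash \rho _{G}$ in \textsc{NM}$_{\rightarrow }^{\text{\textsc{d}}}$; this $\partial ^{\ast }$, of weight polynomial in $\left| G\right| $, is the sought witness.

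Second I would verify the certificate cheaply. By the verifiability lemma the predicate $\partial ^{\ast }\vdash \rho _{G}$ is decidable by a deterministic TM in time polynomial in $\left| \partial ^{\ast }\right| $, and since $\left| \partial ^{\ast }\right| $ is itself polynomial in $\left| G\right| $, the whole check runs in time polynomial in $\left| G\right| $. Soundness and completeness (the theorem that $\rho $ is valid iff it has a dag-like proof) give the genuine biconditional: if $G$ is non-Hamiltonian the witness just built exists, while if $G$ is Hamiltonian then $\rho _{G}$ is not valid and no dag-like proof of $\rho _{G}$ exists. Hence $L$ has exactly the form demanded by the definition of $\mathbf{NP}$, so $L\in \mathbf{NP}$; as $L$ is $\mathbf{coNP}$-complete, this yields $\mathbf{coNP}\subseteq \mathbf{NP}$.

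Finally I would upgrade the inclusion to equality. For any $L^{\prime }\in \mathbf{NP}$ its complement satisfies $\overline{L^{\prime }}\in \mathbf{coNP}\subseteq \mathbf{NP}$, whence $L^{\prime }\in \mathbf{coNP}$; thus $\mathbf{NP}\subseteq \mathbf{coNP}$ and the two classes coincide. The real content of the argument lies entirely upstream, in the Main Theorem (horizontal compression producing a \emph{polynomial}, not merely polynomial-height, dag-like proof) and in the existence of polynomial-height normal proofs for non-Hamiltonian $G$; the corollary itself is a bookkeeping assembly. The single point that demands care is that the nested polynomial bounds, from $\left| G\right| $ to $\left| \rho _{G}\right| $ to $\left| \partial ^{\ast }\right| $ to the verification time, compose into one polynomial in $\left| G\right| $, and that the witness-existence equivalence is a true biconditional rather than a one-sided implication, the ``only if'' direction resting on soundness.
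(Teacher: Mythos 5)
Your proposal is correct and follows essentially the same route as the paper: it certifies the $\mathbf{coNP}$-complete non-hamiltoniancy problem in $\mathbf{NP}$ by combining the polynomial-height normal proof (Lemma 11), the quasi-polynomiality lemma (Lemma 10), the Main Theorem's compression into a polynomial dag-like proof, and the polynomial-time verifiability lemma, then closes with the standard complementation argument $\mathbf{coNP}\subseteq \mathbf{NP}\Rightarrow \mathbf{NP}=\mathbf{coNP}$. The only difference is presentational: the paper states this assembly in two terse sentences, while you spell out the witness-existence biconditional (via the soundness direction of Theorem 5) and the composition of the polynomial bounds explicitly.
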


\section{Conclusion}

Our main result, $\mathbf{NP=PSPACE}$, is obtained by a proof theoretic
approach whose crucial part deals with proof compression in Prawitz-style
formalism of Natural Deduction (ND). Its starting point is based on a desire
to transform the validity problem $\rho \in VAL$, for a given purely
implicational formula $\rho $, into ``small'' and verifiable in $O(\left|
\rho \right| )$-polynomial time regular proof $\partial _{f}^{\flat }$ in
our dag-like ND calculus \textsc{NM}$_{\rightarrow }^{\text{\textsc{d}}}$
for minimal propositional logic (see \S 2.2). To this end we take an
ordinary tree-like derivation $\partial $ of sequent $\Rightarrow \rho $ in
Hudelmaier's sound and complete sequent calculus \textsc{LM}$_{\rightarrow }$
(see \S 3.2) and convert $\partial $ into quasi-polynomial deduction $%
\partial^{\text{\textsc{t}}}$ of $\rho $ in the corresponding tree-like ND
calculus \textsc{NM}$_{\rightarrow }^{\text{\textsc{t}}}$. To complete the
construction by Theorem 11 we horizontally compress $\partial ^{\text{%
\textsc{t}}}$ into polynomial dag-like deduction $\partial ^{\text{\textsc{d}%
}}$ of $\rho $ in \textsc{NM}$_{\rightarrow }^{\text{\textsc{d}}}$ followed
by a desired regular proof $\partial _{f}^{\flat }\subseteq \partial ^{\text{%
\textsc{d}}}$ (recall that Theorem 11 says that all minimal tautologies $%
\rho $ are provable by regular dag-like proofs in \textsc{NM}$_{\rightarrow
}^{\text{\textsc{d}}}$ whose weight is polynomial in the size of $\rho $).
In particular this shows that our multipremise dag-like calculus \textsc{NM}$%
_{\rightarrow }^{\text{\textsc{d}}}$ is sound and complete in minimal logic
with respect to regular dag-like provability.

In the presence of multipremise repetition involved (see \S \S 2.1, 2.2),
regular dag-like provability is crucial for Theorem 11 and resulting
polynomial upper bounds on the size (weight) of proofs in minimal logic.
This is emphasized by Je\v{r}\'{a}bek's paper \cite{Jer} that claims to
obtain exponential lower bounds on the size of dag-like proofs with respect
to conventional notion of provability determined by the condition \emph{%
``all deductive paths are closed''}, instead of our weakening \emph{``all
regular deductive paths are closed''}. According to \cite{Jer}, these
exponential lower bounds hold true also for the Frege systems involved,
which therefore appear useless for the comparison $\mathbf{NP}$ vs $\mathbf{%
coNP}$ vs $\mathbf{PSPACE}$ under consideration. Summing up, exponential
lower bounds claimed in \cite{Jer} are well compatible with our polynomial
upper bounds in the case of dag-like multipremise ND. To put it more
precisely, there are polynomial-time dag-like proofs in our ND formalism
which have no analogy in Frege systems from \cite{Jer}. This provides
exponential speed-up over conventional tree-like provability in \textsc{NM}$%
_{\rightarrow }$, which follows e.g. from a generalization of Example 8
handling horizontal compression of the HAMILTON PATH problem (see \cite{GH3}%
).

It is well-known that $\mathbf{NP=coNP}$ is a consequence of $\mathbf{%
NP=PSPACE}$ (cf. e.g. \cite{arora}, \cite{Papa}). Nonetheless, for brevity
we also presented a simplified proof of $\mathbf{NP=coNP}$ that does not
refer to \textsc{LM}$_{\rightarrow }$. Instead, we considered the
NP-complete HAMILTON PATH problem and formalized its solvability directly in 
\textsc{NM}$_{\rightarrow }$ (see \cite{GH3} for details).

\section{\protect\small index}

{\small $
\begin{array}{lll}
\text{TM} & \text{turing machine} & \text{\S 1.1} \\ 
\text{SAT, VAL} & \text{satisfacton, validity} & \text{\S 1.1} \\ 
\text{SC} & \text{sequent calculus} & \text{\S 1.1, \S 1.2 } \\ 
\text{ND} & \text{natural deduction} & \text{\S 1.1, \S 1.2} \\ 
\text{normalization},\text{(weak) subformula property} & \text{well-known
optimizations} & \text{\S 1.2} \\ 
\text{(propositional) minimal logic} & \text{standard presentations} & \text{%
\S 2, \S 2.1} \\ 
\text{proof compression} & \text{basic ND operation} & \text{\S 1.1, \S 2.3,
\S 3.1 } \\ 
\text{\textsc{NM}}_{\rightarrow }\text{, \textsc{NM}}_{\rightarrow }^{+} & 
\text{basic ND} & \text{\S 1.2} \\ 
\text{\textsc{NM}}_{\rightarrow }^{\text{\textsc{t}}} & \text{tree-like ND}
& \text{\S 2.1} \\ 
\text{\textsc{NM}}_{\rightarrow }^{\text{\textsc{d}}} & \text{dag-like ND} & 
\text{\S 2.2} \\ 
\begin{array}{l}
\partial \text{, }\partial _{f}\text{, }r\text{, }\rho =\ell \left( r\right) 
\text{ }V,E\text{, }h(-)\text{, }h\left( -\right) \\ 
A\left( -\right) \text{, }A_{f}\left( -\right) \text{, }IV\left( -\right) 
\text{, }IE\left( -\right) \text{, }OV\left( -\right) \text{, }OE\left(
-\right)
\end{array}
& \text{basic ND abbreviations} & \text{\S 2.2} \\ 
\left\lceil x\right\rceil \text{, }\left( \rightarrow I\right) \text{, }%
\left( \rightarrow E\right) \text{, }\left( R\right) _{n}\text{, }\left(
R\right) \text{, }\left( S\right) & \text{basic ND inferences} & \text{\S
2.1, \S 2.2} \\ 
\text{(discharged) assumptions},\text{parents},\text{children} & \text{basic
ND notions} & \text{\S 2.1},\text{\S 2.2} \\ 
\text{maximal (regular) deductive path} & \text{-----''-----} & \text{\S 2.2}
\\ 
\text{regular deduction, proof, }\partial _{f}\vdash \rho & \text{basic 
\textsc{NM}}_{\rightarrow }^{\text{\textsc{d}}}\text{ notions} & \text{\S 2.2%
} \\ 
\text{flow function $f$, $f$-thread} & \text{-----''-----} & \text{\S 2.2}
\\ 
C\text{, }C_{0}\text{, }C_{f} & \text{circuits, Lemma 9} & \text{\S 2.2} \\ 
\text{(quasi-)polynomial}\ \text{deductions} & \text{Definition10} & \text{%
\S 2.2} \\ 
\text{main theorem}\ \text{deductions} & \text{Theorem 11} & \text{\S 2.2}
\\ 
\text{weight, size, }h\left( \partial \right) \text{, }\phi \left( \partial
\right) & \text{deduction parameters} & \text{\S 2.3} \\ 
\text{local coherency } & \text{horizontal compression} & \text{\S 2.3} \\ 
\text{fundamental set of paths (\emph{fsp})} & \text{-----''-----, Lemma 12}
& \text{\S 2.3} \\ 
\mathcal{F}^{\text{\textsc{d}}}\text{, }\mathcal{F}^{\flat }\text{, }%
\partial ^{\text{\textsc{d}}}\text{, }\partial _{f}^{\ast }\text{, }\partial
^{\flat }\text{, }\partial _{f}^{\flat } & \text{-----''-----, Lemma 12} & 
\text{\S 2.3} \\ 
\text{\textsc{LM}}_{\rightarrow } & \text{minimal SC} & \text{\S 3.1 }
\end{array}
$ }

\end{document}